\newcommand{\ugly}{\beta(s,\kappa)}
\newcommand{\uglyinv}{\beta(s,\kappa)}
\DeclareMathOperator*{\arginf}{arg\,inf}
\newcommand{\perv}{{\rm PI}}
\newcommand{\tgmc}{\tmech^{\rm gmc}}
\newcommand{\fperv}{f^{\rm perverse}}
\newcommand{\dda}{\frac{\partial}{\partial\alpha}}
\newcommand{\vs}{\vspace{-1mm}}
\newcommand{\hs}[1][0.5]{\hspace{-#1mm}}
\newcommand{\gee}{\mathcal{G}}
\newcommand{\fnash}{\ensuremath{f^{\rm nf}}}
\newcommand{\fopt}{\ensuremath{f^{\rm opt}}}%{\rm opt}}}
\newcommand{\Kmax}{M}
\newcommand{\kone}{\kappa_1}
\newcommand{\ktwo}{\kappa_2}
\newcommand{\pset}{\ensuremath{\mathcal{P}}}
\newcommand{\poa}{\ensuremath{{\rm PoA}}}
\newcommand{\auxpoa}{\overbar{\poa}}
\newcommand{\tmc}{\tmech^{\rm mc}}
\newcommand{\tmech}{T}
\newcommand{\bmin}{\ensuremath{S_{\rm L}}}%\beta_{\rm L}}}
\newcommand{\bmax}{\ensuremath{S_{\rm U}}}%\beta_{\rm U}}}
\newcommand{\see}{\mathscr{S}}
\renewcommand{\L}{\mathcal{L}}
\newcommand{\Lnash}{\ensuremath{\mathcal{L}^{\rm nf}}}
\newcommand{\Lopt}{\ensuremath{\mathcal{L}^*}}%{\rm opt}}}
\newcommand{\Lnf}{\mathcal{L}^{\rm nf}}
\newcommand{\overbar}[1]{\mkern 1.5mu\overline{\mkern-1.5mu#1\mkern-5mu}\mkern 1.5mu}
\newtheorem{theorem}{Theorem}
\newtheorem{lemma}{Lemma}[theorem]
\newtheorem{claim}[lemma]{Claim}
\newtheorem{corollary}[theorem]{Corollary}
\newtheorem{definition}{Definition}
\newtheorem{example}{Example}
\begin{document}
%
% paper title
% can use linebreaks \\ within to get better formatting as desired
% Do not put math or special symbols in the title.
\title{Can Taxes Improve Congestion on All Networks?}

% author names and affiliations
% use a multiple column layout for up to three different
% affiliations
\author{{Philip N. Brown and Jason R. Marden}
\thanks{This work is supported ONR Grant \#N00014-17-1-2060 and NSF Grant \#ECCS-1638214.}
\thanks{P. N. Brown is with the Dept. of Computer Science, University of Colorado at Colorado Springs, {\texttt{philip.brown@uccs.edu}. Corresponding author.}}
\thanks{J. R. Marden is with the Department of Electrical and Computer Engineering, University of California, Santa Barbara, {\texttt{jrmarden@ece.ucsb.edu}.  }}
\thanks{This article draws from conference papers that appeared in~\cite{Brown2016d,Brown2017b,Brown2017e}.}
}

% conference papers do not typically use \thanks and this command
% is locked out in conference mode. If really needed, such as for
% the acknowledgment of grants, issue a \IEEEoverridecommandlockouts
% after \documentclass

% for over three affiliations, or if they all won't fit within the width
% of the page, use this alternative format:
% 
%\author{\IEEEauthorblockN{Michael Shell\IEEEauthorrefmark{1},
%Homer Simpson\IEEEauthorrefmark{2},
%James Kirk\IEEEauthorrefmark{3}, 
%Montgomery Scott\IEEEauthorrefmark{3} and
%Eldon Tyrell\IEEEauthorrefmark{4}}
%\IEEEauthorblockA{\IEEEauthorrefmark{1}School of Electrical and Computer Engineering\\
%Georgia Institute of Technology,
%Atlanta, Georgia 30332--0250\\ Email: see http://www.michaelshell.org/contact.html}
%\IEEEauthorblockA{\IEEEauthorrefmark{2}Twentieth Century Fox, Springfield, USA\\
%Email: homer@thesimpsons.com}
%\IEEEauthorblockA{\IEEEauthorrefmark{3}Starfleet Academy, San Francisco, California 96678-2391\\
%Telephone: (800) 555--1212, Fax: (888) 555--1212}
%\IEEEauthorblockA{\IEEEauthorrefmark{4}Tyrell Inc., 123 Replicant Street, Los Angeles, California 90210--4321}}

% use for special paper notices
%\IEEEspecialpapernotice{(Invited Paper)}

% make the title area
\maketitle

% As a general rule, do not put math, special symbols or citations
% in the abstract
\begin{abstract}
We ask if it is possible to positively influence social behavior with no risk of unintentionally incentivizing pathological behavior.
In network routing problems, if network traffic is composed of many individual agents, it is known that self-interested behavior among the agents can lead to suboptimal network congestion.
We study situations in which a system planner charges monetary tolls for the use of network links in an effort to incentivize efficient routing choices by the users, but in which the users' sensitivity to tolls is heterogeneous and unknown.
We seek locally-computed tolls that are guaranteed not to incentivize worse network routing than in the un-influenced case.
Our main result is to show that if networks are sufficiently complex and populations sufficiently diverse, perverse incentives cannot be systematically avoided: any taxation mechanism that improves outcomes on one network must necessarily degrade them on another.
Nonetheless, for the simple class of parallel networks, non-perverse taxes do exist; we fully characterize all such taxation mechanisms, showing that they are a generalized version of traditional marginal-cost tolls.
%We close with an analytical comparison between optimizing for worst-case behavior and avoiding perverse incentives, and we show in particular that it 
%We then study the tradeoff between improving worst-case behavior on a class of networks and avoiding perverse incentives on individual networks, and show that minimizing worst-case congestion necessitates the creation of perverse incentives on some networks.
%Finally, we show that our results have interesting implications for the theory of altruistic behavior: our negative results imply the existence of altruism paradoxes in which making users more altruistic can cause congestion to worsen.
\end{abstract}

% no keywords

% For peer review papers, you can put extra information on the cover
% page as needed:
% \ifCLASSOPTIONpeerreview
% \begin{center} \bfseries EDICS Category: 3-BBND \end{center}
% \fi
%
% For peerreview papers, this IEEEtran command inserts a page break and
% creates the second title. It will be ignored for other modes.
\IEEEpeerreviewmaketitle

%\baselineskip=11.9pt
%
%Things Jason \& I talked about on 4/13/18:
%\begin{itemize}
%\item Question: $\exists$ a NATM that improves PoA w/o making things worse on any network?
%\item Answer: No. Because my analysis shows that the only TM that doesn't make things worse on any inst is the trivial one.
%\item Let $\mathcal{T}^{\rm N}$ be the class of all NATM s.t. $\perv(\gee,T)=1$ for all $T\in\mathcal{T}^{\rm N}$.
%\item Then $\inf\limits_{T\in\mathcal{T}^{\rm N}}\poa(\gee,T)=\poa(\gee,\emptyset).$
%\item Alternatively, we have that $$\poa(\gee,T)<\poa(G,\emptyset) \ \implies \ \perv(\gee,T)>1$$
%and
%$$\perv(\gee,T)=1 \ \implies \ \poa(\gee,T)=\poa(\gee,\emptyset).$$
%\item Would need to require sufficiently-rich $\gee$ to ensure that it's all true.
%\end{itemize}

\section{Introduction}

Modern computational and infrastructure systems are becoming increasingly interconnected with the social systems that they serve.
Accordingly, engineers must be aware of the ways in which social behavior affects system performance; this has spurred recent research on influencing social behavior to achieve engineering objectives~\cite{Brown2017d,Alizadeh2014,Gopalakrishnan2014}.
%For instance, mitigating traffic congestion is not a merely-technical exercise due to the fact that driver choices and behavior have a large impact on congestion.
Examples of problems in this context include ridesharing systems~\cite{Kleiner2011}, transportation networks~\cite{Brown2017a,Brown2016d,Brown2017b,Brown2017e}, and power grids~\cite{Silva2001}.

In this paper, we study a popular model of traffic congestion known as ``non-atomic congestion games,'' in which traffic needs to be routed across a network from a source node to a destination node in a way that minimizes the average delay experienced by the traffic.
If a central authority can control the traffic explicitly, it is typically straightforward to compute the optimal assignment of traffic; unfortunately, if the mass of traffic is composed of individual decision-makers, the aggregate network flows that emerge from individual self-interested decision-making may be far from optimal.
This inefficiency due to self-interested decisions is termed the \emph{price of anarchy}, formally defined as the ratio between the total congestion of a selfishly-routed flow and that of an optimal flow, taken in worst case over a class of games~\cite{Papadimitriou2001}.
It is known that even on two-link parallel networks with convex latency functions, the price of anarchy is unbounded -- that is, selfish flows can be arbitrarily inefficient~\cite{Roughgarden2002}.

Accordingly, much research has focused on methods of influencing the routing choices made by individual users as a means to improve the price of anarchy; one promising set of methodologies involves charging specially designed tolls to users of network links in an effort to incentivize more-efficient network flows~\cite{Fleischer2004,Karakostas2004}.
In~\cite{Beckmann1956,Sandholm2002} it is shown that if a special type of tolling function called a \emph{marginal-cost} toll is levied on each network link, that this incentivizes optimal network routing -- provided that all network users trade off time and money equally.
An attractive feature of marginal-cost tolls is that they can be computed locally on each network link: a link's toll depends \emph{only} on that link's congestion characteristics and traffic flow.
Thus, the optimality guaranteed by these tolls is intrinsically robust to variations or mischaracterizations of network structure.
This local-computation property is known as \emph{network agnosticity}; in essence, marginal-cost tolling functions only ``know" their own edge -- they are agnostic to global network specifications~\cite{Brown2017d}.

The optimality guarantees coupled with the network agnosticity of marginal-cost tolls suggest the appealing possibility that a system operator could design link prices to incentivize efficient routing without needing to know the overall structure of the network. %; this would render the pricing scheme robust to sudden changes in network topology.
One benefit of an effective network-agnostic taxation mechanism is that it would be largely robust to sudden ``changes'' in network topology such as those caused by weather events or construction projects.
%If a system operator could employ an effective network-agnostic means of positively influencing routing behavior, this would represent a highly 

Unfortunately, it has recently been shown that if the user population is diverse in price-sensitivity, the optimality guarantees of marginal-cost tolls vanish~\cite{Brown2017a}.
%That is, there exist networks and diverse user populations for which marginal-cost tolls actually incentivize \emph{worse} routing behavior than if no tolls were levied.
That is, if some users value their time more than others, networks exist on which the routing incentivized by marginal-cost tolls has higher congestion than un-influenced routing.
Furthermore, if the prices are constrained to be fixed (i.e., constant functions of link flow rate), \emph{every} network agnostic taxation mechanism can create perverse incentives on some parallel network -- even for a homogeneous user population.
Whether {flow-varying} taxation functions suffer from the same ubiquitous perversity has remained an open question.

To study this phenomenon rigorously, we formulate a new robustness metric that we term the \emph{perversity index}.
We define the perversity index of a taxation mechanism as the ratio between the total congestion it incentivizes and the un-influenced congestion, taken in worst case over user populations and networks.
That is, if a taxation mechanism has a perversity index strictly greater than $1$, this indicates that networks exist on which this mechanism degrades the quality of flows, rather than improving it.
When this is the case, we say that a taxation mechanism is \emph{perverse}.
We ask the following questions:
\begin{enumerate}
\item Do there exist network-agnostic taxation mechanisms which can improve worst-case congestion on the class of all networks while avoiding causing congestion degradations on each network instance?
\item What are the common characteristics of taxation mechanisms which systematically avoid perverse incentives?
\end{enumerate}

%\subsection{Summary of Contributions}

In this paper, we first show definitively in Theorem~\ref{thm:neg} that if networks are sufficiently complex and the user population is sufficiently diverse, every network-agnostic taxation mechanism that improves outcomes on some networks must degrade outcomes on others.
That is, every network-agnostic taxation mechanism is either {trivial} or {perverse.} 
Let $\poa\left(\gee,T\right)$ and $\perv\left(\gee,T\right)$ denote the price of anarchy and perversity index (respectively) of class of routing problems $\gee$ under the influence of network-agnostic taxation mechanism $T$, and let $\poa\left(\gee,\emptyset\right)$ denote the price of anarchy of uninfluenced problems (these notations are formally defined later in Section~\ref{ssec:tmpm}).
For sufficiently-rich $\gee$, Theorem~\ref{thm:neg} states that
\begin{equation}
\poa\left(\gee,T\right)<\poa\left(\gee,\emptyset\right) \ \implies \ \perv\left(\gee,T\right)>1.
\end{equation}

Since Theorem~\ref{thm:neg} rules out the possibility of a non-perverse mechanism for general routing problems, we next turn to a study of the relationship between PoA and PI on smaller classes of routing problems.
We consider the case of \emph{heterogeneous} populations on the class of parallel-path networks.
Here, we fully characterize the set of all network-agnostic taxation mechanisms having a perversity index of $1$, and show that they are all a simple variant of classical marginal-cost tolls. 
% (which we term ``generalized'' marginal-cost tolls).
Among these, we also derive tolls which minimize the price of anarchy, and show that these are equivalent to marginal-cost tolls which are ``tuned'' to the most-sensitive users.
%All of this combines to demonstrate that parallel networks constitute a remarkably well-behaved class of networks: perverse incentives can be systematically avoided simply by charging simple marginal-cost tolls that are not scaled too aggressively.

Finally, considering the simplified case of homogeneous populations on general networks, we show in Theorem~\ref{thm:opt} that tolls which optimize the price of anarchy are in some sense \emph{maximally perverse.}
That is, these tolls render the price of anarchy equal to the perversity index, meaning that under the influence of these optimal tolls, the problem instances on which worst-case routing outcomes occur would experience lower congestion without tolls.
%
%instances of worst-case routing outcomes are \emph{more congested} than their uninfluenced counterparts.
%
We point out that a consequence of this is a simple procedure for minimizing the price of anarchy subject to an upper bound on perversity.

We expect that the perversity index will provide a rich framework for studying many different settings in which a planner desires to positively influence social behavior.
While this paper studies the perversity index for congestion game pricing problems, such a measure of the risk of unintended consequences may be of interest to a political organization which wishes to assure that a proposed policy will not unduly disenfranchise certain groups.

\section{Model and Related Work}

\subsection{Routing Game}

Consider a network routing problem for a network $(V,E)$ comprised of vertex set $V$ and edge set $E$.
%A mass of $r$ units of traffic needs to be routed  from a common source $s\in V$ to a common destination $t\in V$.
%We write $\pset$ to denote the set of \emph{paths} available to the traffic, where each path $p\in\pset$ consists of a set of edges connecting $s$ to $t$.
%We call a source/destination vertex pair $(\sigma^c,t^c)\in (V\times V)$ a \emph{commodity}, and the set of all such commodities $\cee$. 
There is a mass of traffic $r>0$ that needs to be routed from a common source $\sigma$ to a common destination $t$. 
We write $\pset\subset 2^E$ to denote the set of \emph{paths} available to traffic, where each path $p\in\pset$ consists of a set of edges connecting $\sigma$ to $t$. 
%Let $\pset = \cup\left\{\pset^c\right\}$.
%A network is called \emph{symmetric} if there is exactly one commodity: $\cee=\{c\}$; i.e., all traffic routes from a common source $\sigma$ to a common destination $t$ using a common path set $\pset$.
A network is called a \emph{parallel} network if all paths are disjoint; i.e., for all paths $p,p'\in\pset$, $p\cap p'=\emptyset$.
%Note that a parallel network need not be symmetric; although all traffic must share a common source and destination, the various commodities' path sets $\pset^c$ may still differ.
%
%
We write $f_p\geq 0$ to denote the mass of traffic using path $p$. % and $f_p \triangleq \sum_{c\in\cee}f_p^c$.
A \emph{feasible flow} $f\in\mathbb{R}^{|\pset|}$ is an assignment of traffic to various paths such that $\sum_{p\in\pset}f_p = r$. % and $\sum_{c\in\cee}r^c=r$. 

Given a flow $f$, the flow on edge $e$ is given by $f_e = \sum_{p:e\in p}f_p$.
To characterize transit delay as a function of traffic flow, each edge $e\in E$ is associated with a specific latency function $\ell_e:[0,r]\rightarrow[0,\infty)$; $\ell_e(f_e)$ denotes the delay experienced by users of edge $e$ when the edge flow is $f_e$.
We adopt the standard assumptions that each latency function is nondecreasing, convex, and continuously differentiable.
We measure the cost of a flow $f$ by the \emph{total latency}, given by
\vs
\begin{equation}
{\cal L}(f) =  \sum_{e\in E}  f_e \cdot \ell_e(f_e)=  \sum\limits_{p\in \pset}  f_p \cdot \ell_p(f), \label{eq:totlatfpath}
\vs
\end{equation}
where $\ell_p(f) = \sum_{e\in p}\ell_e(f_e)$ denotes the latency on path $p$.
We denote the flow that minimizes the total latency by
\vs
\begin{equation} 
f^* \in \underset{f {\rm \ is \ feasible}}{\rm arg min} \ {\cal L}(f). 
\vs
\end{equation}

A \emph{routing problem} is given by $G=\left(V,E,\left\{\ell_e\right\}\right)$.
We denote classes of routing problems with the calligraphic $\gee$.

To study the effect of taxes on self-interested behavior, we model the above routing problem as a non-atomic congestion game.
We assign each edge $e \in E$ a flow-dependent  taxation function $\tau_e : \mathbb{R}^+ \rightarrow \mathbb{R}$.
We characterize the taxation sensitivities of the users with a monotone, nondecreasing function $s:[0,r]\rightarrow [\bmin,\bmax]$, where each user $x \in [0,r]$ has a taxation sensitivity $s_x \in [S_{\rm L}, S_{\rm U}]\subseteq \mathbb{R}^+$, where $\bmin\geq0$ and $\bmax\leq+\infty$ are lower and upper sensitivity bounds, respectively.
To avoid trivialities, we assume that zero measure of traffic has sensitivity exactly equal to $0$, or that for all $\epsilon>0$, $s_\epsilon>0$. %
%\footnote{
%Note that this assumption does not preclude the possibility that $\bmin=0$.
%When this is the case, to require $0$ mass having a sensitivity of $0$ 
%}
%Note also that we allow $\bmax$ to take the value $+\infty$.
If all users have the same sensitivity (i.e., $s_x=s_y$ for all $x\in[0,r]$ and $y\in[0,r]$), the population is said to be \emph{homogeneous}; otherwise it is \emph{heterogeneous}.

Given a flow $f$, the cost that user $x\in[0,r]$ experiences for using path ${p} \in \pset$ is of the form%
%
%\footnote{
%Strictly speaking, to specify a Nash flow one must specify \emph{which users} are using \emph{which paths}.
%For clarity, we omit explicit notation for this, but it is to be understood that every Nash flow $f$ contains an implicit assignment of each of the users to some path.
%Hence, $J_x^c(p;f)$ means ``the cost experienced on path $p$ by user $x$ from commodity $c$, given that the aggregate network flow is $f$.''
%}
%
\begin{equation}
J_x(p;f) = \sum\limits_{e\in {p}}\left[\ell_{e}(f_{e}) + s_x  \tau_{e}(f_{e})\right].
\end{equation}
Thus, for each user $x\in[0,r]$, the sensitivity $s_x$ can be viewed as a constant gain on the toll; a user's experienced cost is then the sum of the latency and sensitivity-weighted toll.
Note that sensitivity can be interpreted as the reciprocal of an agent's value-of-time. %
%\footnote{We adopt this formulation from~\cite{Cole2003}. 
%Note that constant sensitivity is a commonly-studied special case; alternative formulations are possible~\cite{Mackie2001}.} 
We assume that each user selects the lowest-cost path from the available source-destination paths.
We call a flow $f$ a \emph{Nash flow} if all users are individually using minimum-cost paths given the choices of other users.
That is, every user $x \in [0,r]$ using path $p$ in $f$ experiences a cost satisfying
\begin{equation}
J_x(p;f) = \min_{\tilde{p} \in \pset} J_x \left(\tilde{p};f\right).
\end{equation}
%\begin{equation}
%J^c_x(f) = \min_{p \in \pset^c} \left\{ \sum\limits_{e\in p}\left[\ell_{e}(f_{e}) + s^c_x  \tau_{e}(f_{e})\right] \right\}.
%\end{equation}
%
It is well-known that a Nash flow exists for any non-atomic congestion game of the above form \cite{Mas-Colell1984}.  
%If the population is homogeneous or the taxes are constant functions, these flows are unique in cost (that is, all Nash flows have the same total latency)~\cite{Cole2003,Milchtaich2005}.

In our analysis, we assume that each sensitivity distribution function $s$ is unknown to the pricing authority; for a given routing problem $G$ and $\bmax\geq\bmin\geq0$ we define the set of possible sensitivity distributions as the set of monotone, nondecreasing functions
$%\begin{equation}
\see_G = \{ s:[0,r] \rightarrow [S_{\rm L}, S_{\rm U}] \}.
$ %\end{equation}
We write $s\in\see_G$ to denote a specific collection of sensitivity distributions, which we term a \emph{population}.

\subsection{Taxation Mechanisms and Performance Metrics} \label{ssec:tmpm}

To directly study the importance of knowing network structure on the effectiveness of taxation mechanisms, we consider \emph{network-agnostic} taxation mechanisms as in~\cite{Brown2017a}.
Here, each edge's taxation function is computed using only locally-available information.
That is, $\tau_e(f_e)$ depends only on $\ell_e$, not on edge $e$'s location in the network, the network topology, the overall traffic rate, or the properties of any other edge.
A network-agnostic taxation mechanism $\tmech$ is thus a mapping from latency functions to taxation functions, and the taxation function associated with latency function $\ell_e$ is given by
\vs
\begin{equation}
	\tau_e(\cdot) = \tmech(\ell_e).
\vs
\end{equation}

To evaluate the performance of taxation mechanisms, we write $\Lnf(G,s,\tmech)$ to denote the total latency of a Nash flow for routing problem $G$ and population $s$ induced by taxation mechanism $\tmech$.
Let  $\Lnf(G,\emptyset)$ denote the total latency of an un-influenced Nash flow on $G$, and let $\Lopt(G)$ denote the total latency of the optimal flow on $G$.
The \emph{price of anarchy} compares the Nash flows induced by taxes with the optimal flows in worst case over routing problems and populations.
Formally, the price of anarchy of a class of games $\gee$ under the influence of taxes $\tmech$ is defined as
\begin{equation} \label{eq:poadef}
\poa\left(\gee,\tmech\right) \triangleq \sup_{G\in\gee}\sup_{s\in\see_G} \frac{\Lnf(G,s,\tmech)}{\Lopt(G)}.
\end{equation}

In this paper, we pose a somewhat different question: rather than measuring how far the influenced flows are from \emph{optimal}, it may be desirable to quantify the risk of causing harm to congestion \emph{with respect to the un-influenced flows.}
This can be easily captured with a natural modification of the price of anarchy concept by replacing the optimal total latency with the un-influenced total latency $\Lnf(G,\emptyset)$.
We call such a metric the \emph{index of perversity} of taxation mechanism $\tmech$, defined as
\begin{equation} \label{eq:pidef}
\perv\left(\gee,\tmech\right) \triangleq \sup_{G\in\gee}\sup_{s\in\see_G} \frac{\Lnf(G,s,\tmech)}{\Lnf(G,\emptyset)}.
\end{equation}

Here, if $\tmech$ has a large index of perversity, this means that on some networks, it incentivizes flows that are considerably worse than the un-influenced flows; in other words, it can create perverse incentives.
If a taxation mechanism has a perversity index of~$1$, we say that it is \emph{non-perverse}; otherwise, it is \emph{perverse}.
Note that it is always true that $\perv(\gee,\tmech)\leq\poa(\gee,\tmech)$; this is because on any $G$, $\Lnf(G,\emptyset)\geq\Lopt(G)$.
Finally, when these metrics are evaluated only over homogeneous populations (as opposed to heterogeneous), we write them as $\poa^{\rm hm}(\gee,\tmech)$ and $\perv^{\rm hm}(\gee,\tmech)$, respectively.

\subsection{Related Work} \label{ssec:background}

The following is a brief survey of relevant work on the robustness and perversity of taxation mechanisms in congestion games. 
Much work focuses on \emph{network-dependent} taxation mechanisms, in which each edge toll is a function of the entire routing problem.
Network dependency allows taxation mechanisms to incentivize precisely-targeted network flows, as with %, particularly in cases where the network is known to be immutable.
\emph{fixed tolls}, which for any $e \in G$, $\tau_e(f_e) = q_e$ for some $q_e \geq 0$.
If network structure, traffic-rate, and user sensitivity specifications are known precisely, it is possible to compute fixed tolls which induce optimal Nash flows on any network~\cite{Fleischer2004,Karakostas2004}.
However, if any of these pieces of information are unknown, no fixed tolls exist which guarantee optimal routing, and the perversity index of fixed tolls is typically greater than 1~\cite{Brown2017a}.
Network-dependent tolls have been studied subject to many restrictions: bounded, budget-balanced, and dynamic schemes have been proposed~\cite{Bonifaci2011,Jelinek2014,Arieli2015,Bhaskar2014}.
However, the robustness of most of these schemes to variations of user sensitivity has not been investigated explicitly, and the perversity index in these contexts is generally unknown.

The classical example of a network-agnostic taxation mechanism is that of the \emph{marginal-cost} or \emph{Pigovian} taxation mechanism~$\tmc$. For any edge $e$ with latency function $\ell_e$, the accompanying marginal-cost toll is 
\vs
\begin{equation}
\vs
\tau_e^{\rm mc}\left(f_e\right)  = f_e \cdot \ell_e'(f_e), \ \forall f_e \geq 0, \label{eq:mc}
\end{equation}
where $\ell'$ represents the flow derivative of $\ell$.
It has long been known that for any $G$, it is true that
$%\begin{equation}
{\cal L}^*(G) = {\cal L}^{\rm nf}\left(G, s, \tmech^{\rm mc}\right),
$ %\end{equation}
provided that all users have a sensitivity equal to 1~\cite{Beckmann1956}.
That is, for unit-sensitivity homogeneous users, we have a perfect perversity index $\perv\left(\gee,\tmc\right)=\poa \left(\gee,\tmc\right)=1$.

Recent research has identified several new network-agnostic taxation mechanisms, which are all in some sense generalizations of~$\tmc$: in restricted settings, scaled marginal-cost tolls can be non-perverse~\cite{Brown2017c} while guaranteeing improvements; furthermore, there exists a universal taxation mechanism based on~$\tmech^{\rm mc}$ which optimizes routing using large tolls~\cite{Brown2017d}.
Dynamic network-agnostic tolls converging to~$\tmech^{\rm mc}$ are studied in~\cite{Poveda2017}.
In the present paper, Theorem~\ref{thm:nonperv} shows that this  connection to $\tmc$ is no accident; in a sense, the \emph{only} interesting network-agnostic taxation mechanisms are generalizations of~$\tmc$.

Our work here is also tightly connected with the broader theme of ``biased'' congestion games~\cite{Meir2015b}, where players misinterpret edge cost functions in some systematic way.
Several works have investigated the price of anarchy under various payoff biases such as altruism~\cite{Chen2014} and pessimism~\cite{Meir2015b}.
Analogous to our definition of the perversity index, the authors of~\cite{Lianeas2016} study the ``price of risk aversion,'' which measures how society's risk preferences affect aggregate congestion as compared to ordinary Nash flows.
Similarly,~\cite{Kleer2017} studies the ``deviation ratio,'' which measures essentially the same quantity for arbitrary cost function biases.

\section{Our Contributions} \label{sec:ourContributions}

Our first question is this: do there exist network-agnostic taxation mechanisms which have a perversity index of $1$?
Example~\ref{ex:braessaug} shows that at least the marginal-cost taxation mechanism~\eqref{eq:mc} has a perversity index strictly greater than $1$; subsequently, Theorem~\ref{thm:neg} shows that this is true for any network-agnostic taxation mechanism.

\begin{example} \label{ex:braessaug}
Consider the network depicted in Figure~\ref{fig:braessaug}, consisting of the well-known Braess's Paradox network~\cite{Braess2005} in parallel with a single constant-latency edge.
Let marginal-cost tolls be charged on the network according to~\eqref{eq:mc}; that is, edges $e_1$ and $e_4$ are each charged a flow-varying toll of $\tau_e(f_e) = f_e$.
If the user population has $2$ units of traffic and a homogeneous toll sensitivity of $s\in[0,1]$, the unique Nash flow on this network is the one labeled ``Uninfluenced/Optimal" in Figure~\ref{fig:braessaug}. Since all agents are experiencing a cost of $2+s$, deviating to the zig-zag path or to $e_6$ would yield a larger cost of $2+2s$ or $3$, respectively.
Since there are $2$ units of traffic experiencing a delay of $2$ each, the total latency is $4$.

Now consider a heterogeneous population in which $1$ unit of traffic has a sensitivity of $s_1=0$ (the orange traffic in Figure~\ref{fig:braessaug}), and $1$ unit of traffic has a sensitivity of $s_2=1$.
In this case, a new Nash flow emerges: one in which all the insensitive traffic uses the zig-zag path, and all the sensitive traffic uses the constant-latency link, labeled ``Marginal-cost Tolls" in Figure~\ref{fig:braessaug}.
In this flow, any agent on the zig-zag path has a delay of $2$, but any agent on the constant-latency path has a delay of $3$, for a total latency of $2+3 = 5$, which is considerably greater than the un-tolled total latency of $4$.

Note here that we have $s_1=0$ for ease of exposition, but similar results can be obtained for any small positive value for $s_1$, as
we show explicitly in the proof of Theorem~\ref{thm:neg}.

\vspace{-3mm}
\end{example}

\begin{figure}
\vspace{2mm}
\centerline{\includegraphics[scale=0.24]{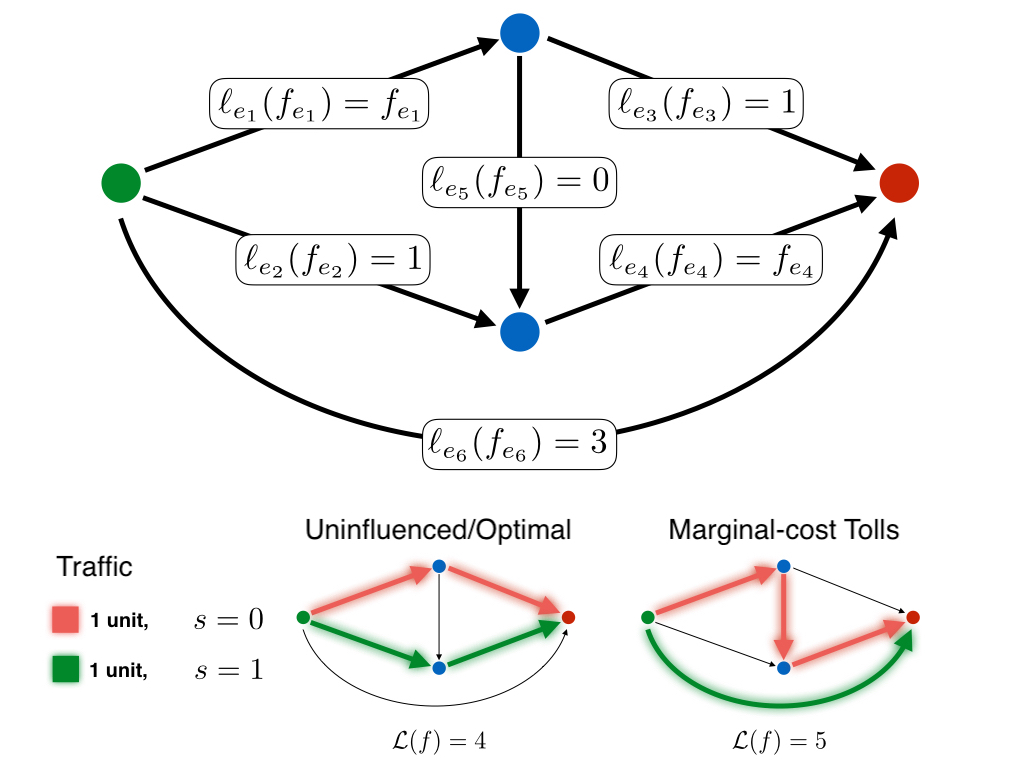}}
\vspace{-1em}
\caption{Example~\ref{ex:braessaug}: A network demonstrating that marginal-cost tolls are perverse on single-source/destination networks.
%This network consists of the well-known Braess network in parallel with a single constant-latency edge.
The user population has mass $r=2$, divided equally between sensitivity types with $s=0$ and $s=1$.
Here, when no tolls are levied, the unique Nash flow for any population is depicted on the left, under the caption ``Uninfluenced/Optimal''; this flow has a total latency of $4$.
However, for the depicted heterogeneous population, marginal-cost tolls induce the Nash flow shown on the right under caption "Marginal-cost Tolls,'' with an increased total latency of $5$.
%Here, \emph{any} homogeneous population using this network has the right-hand flow as a unique Nash flow.
%Note that this example exhibits perverse incentives even in the case that user sensitivities do not actually take the value~$0$ (as illuminated in the proof of Theorem~\ref{thm:neg}).
}
\label{fig:braessaug}
\vspace{-4mm}
\end{figure}

\subsection{Perverse incentives are unavoidable if networks are sufficiently complex}

Our first theorem shows that the pathology shown in Example~\ref{ex:braessaug} is not merely a vestige of the particular form of marginal-cost tolls; indeed, all network-agnostic taxation mechanisms can create perverse incentives if the class of networks is rich enough.

\begin{theorem} \label{thm:neg}
Let $\gee$ denote the class of all routing problems, %.
%\footnote{
%As discussed in the proof of Theorem~\ref{thm:neg}, this result actually applies to two much simpler classes of networks: the class of symmetric (not necessarily parallel) networks, and the class of parallel (not necessarily symmetric) networks.
%In other words, networks need not be overly complex to exhibit perverse incentives.
%}
and let $\bmax>\bmin=0$%
\footnote{
Note that $\bmin=0$ is being used in conjunction with the assumption that all traffic has positive sensitivity.
The proof of Theorem~\ref{thm:neg} is completed with all traffic having a strictly positive sensitivity $s_1$, and 
$\bmin=0$ simply ensures that the traffic's sensitivity is not bounded away from zero.
}%
. 
If a network-agnostic taxation mechanism improves the price of anarchy, then it creates perverse incentives.
That is,
%, every non-trivial network-agnostic taxation mechanism has a strictly positive perversity index on $\gee$: %\in\left\{\gee^{\rm p},\gee^{\rm s}\right\}$:
\begin{equation}
\poa\left(\gee,T\right)<\poa\left(\gee,\emptyset\right) \ \implies \ \perv\left(\gee,T\right)>1. \label{eq:implication}
\end{equation}
\end{theorem}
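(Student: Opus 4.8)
First note that the hypothesis is stronger than it appears: since the uninfluenced price of anarchy over the class of all routing problems is unbounded (classical; cf.~\cite{Roughgarden2002}, as recalled in the introduction), we have $\poa(\gee,\emptyset)=\infty$, so the antecedent $\poa(\gee,T)<\poa(\gee,\emptyset)$ merely says that $T$ renders the price of anarchy finite. In particular $T$ cannot be the trivial mechanism: if $T(\ell)(x)=0$ for every latency function $\ell$ and every $x\ge0$, then on every routing problem and for every population the tolled game coincides with the untolled game, so $\poa(\gee,T)=\poa(\gee,\emptyset)=\infty$, contradicting the antecedent. Hence under the hypothesis there exist a latency function $\ell^{\dagger}$ (automatically nondecreasing, convex and $C^1$, since $T$ acts on such functions) and a flow value $v>0$ with $c:=T(\ell^{\dagger})(v)\neq0$. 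The hypothesis is not invoked again; all that remains is to amplify this one nonzero toll into a perverse example.

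\textbf{The perverse gadget.} I would embed $\ell^{\dagger}$ into an augmented-Braess network in the spirit of Example~\ref{ex:braessaug}: a Braess subnetwork whose two ``variable'' edges both carry the latency function $\ell^{\dagger}$, whose two ``fixed'' edges carry constant latencies, and whose shortcut edge carries a near-zero constant latency $\delta$, all placed in parallel with a single direct source--destination edge of constant latency $L$. The traffic rate is taken to be $2v$, so that in the symmetric split flow each variable edge carries exactly $v$ and hence bears toll $c$. The population consists of two sensitivity types: a mass of \emph{insensitive} users with sensitivity $s_1\in(0,\bmax]$ taken arbitrarily small --- this is the only place $\bmin=0$ enters, precisely as in the footnote --- and a complementary mass of \emph{sensitive} users with sensitivity $s_2\in(s_1,\bmax]$. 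The constants ($L$, $\delta$, the fixed latencies, and the mass split) are chosen so that: (i) with no tolls, the unique Nash flow routes every user on the two natural Braess paths --- a ``good'' flow whose total latency $\Lnf(G,\emptyset)$ is then fixed; and (ii) the instant the variable edges bear the nonzero toll $c$, this good flow is no longer an equilibrium, because the sensitive users strictly prefer to abandon a natural path. If $c>0$ they escape to the direct edge $P_4$; this decongests one of the variable edges, which in turn makes the shortcut path cheap for the insensitive users, who pour onto it --- exactly a toll-triggered Braess reconfiguration. The induced Nash flow then carries the insensitive mass over two congested copies of $\ell^{\dagger}$ and the sensitive mass on the direct edge, and has strictly larger total latency than the good flow. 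If instead $c<0$ (a subsidy on the variable edges), the mirror-image gadget applies: the sensitive users over-concentrate on the subsidized edge, again driving total latency above its untolled value.

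\textbf{The estimate, and the main obstacle.} It then remains to verify $\Lnf(G,s,T)>\Lnf(G,\emptyset)$. I would write both total latencies out explicitly and bound the difference below by the extra delay the insensitive mass incurs for traversing two congested copies of $\ell^{\dagger}$ instead of one --- a strictly positive quantity that does \emph{not} shrink to zero as $\delta\downarrow0$ and $s_1\downarrow0$ --- while arranging the remaining parameters so that every other discrepancy between the two flows is negligible. This yields $\perv(\gee,T)\ge\Lnf(G,s,T)/\Lnf(G,\emptyset)>1$, which is the contrapositive of~\eqref{eq:implication}. The crux --- and the main obstacle --- is step (ii): unlike Example~\ref{ex:braessaug} we control neither the shape of $\ell^{\dagger}$ nor the magnitude or sign of $c$, and $T$ may additionally levy large tolls on the auxiliary constant-latency edges. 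The gadget must therefore be a genuine knife's-edge construction in which \emph{any} nonzero toll on the variable edges, scaled by \emph{any} positive sensitivity, suffices to tip the equilibrium, and in which the (only partially controllable) tolls $T$ places on the auxiliary edges cannot reverse the tip; establishing this, together with uniqueness of the relevant Nash flows in each regime, is where the real effort lies.
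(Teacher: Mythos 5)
Your overall shape matches the paper's --- argue the contrapositive, then exhibit perversity with an augmented-Braess gadget and a two-type population whose low sensitivity $s_1>0$ is taken arbitrarily small --- but the reduction you use to enter the gadget is broken, and the obstacle you flag at the end is not a verification detail but the mathematical core of the theorem, which your proposal does not resolve. Concretely: from $\poa(\gee,T)<\poa(\gee,\emptyset)$ you conclude only that some toll value $c=T(\ell^{\dagger})(v)$ is nonzero, and your step (ii) asserts that ``the instant the variable edges bear the nonzero toll $c$, this good flow is no longer an equilibrium.'' This fails for the mechanism $T(\ell)=\kone\ell$ with any $\kone\in(-1/\bmax,0)\cup(0,\infty)$: every toll is nonzero, yet each user $x$ faces cost $(1+s_x\kone)\ell_e(f_e)$, a positive user-specific rescaling of the latency, so \emph{every} Nash flow on \emph{every} network for \emph{every} population coincides with the uninfluenced one. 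No gadget can tip such a mechanism, so ``some toll value is nonzero'' cannot be the hypothesis from which perversity is derived. The correct dichotomy is between mechanisms that never alter any Nash flow and those that alter some, and identifying the former class is precisely what requires work.

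The paper supplies that work in Lemma~\ref{lem:necessary}: a battery of parallel-network constructions forces any candidate non-perverse mechanism to be additive ($T(\ell_1)+T(\ell_2)=T(\ell_1+\ell_2)$), to assign the constant toll $\kone b$ to a constant latency $b$, to map degree-$d$ monomials to degree-$d$ monomials, and ultimately to have the two-parameter form $\tau_e=\kone\ell_e+\ktwo f_e\ell_e'$ with $\kone>-1/\bmax$, $0\le\ktwo\le\kone+1/\bmax$. Only after this reduction are the tolls on \emph{every} edge of the Braess gadget --- including the auxiliary constant-latency edges whose tolls you correctly worry about --- fully determined; the trivial mechanisms are then exactly those with $\ktwo=0$, and for $\ktwo>0$ the knife-edge computation (choosing $s_1$ so that $\gamma_1=\gamma_2/8$ where $\gamma_i=s_i\ktwo/(1+s_i\kone)$, and calibrating the constant latencies to $1+\gamma_2/8$ and $2+\gamma_2$) produces a tolled Nash flow of total latency $4+\gamma_2$ against an untolled $4+\gamma_2/2$. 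Without a structural lemma of this kind you have no handle on the sign, shape, or flow-dependence of $T$ on any edge of your gadget, and the claim that the latency gap ``does not shrink to zero as $s_1\downarrow 0$'' is unsupported. In short, the proposal correctly locates the difficulty but leaves it unsolved, and the missing ingredient is the characterization result that constitutes most of the paper's proof.
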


A cornerstone of the proof of Theorem~\ref{thm:neg} is the following lemma, which gives a set of necessary conditions for a network-agnostic taxation mechanism to be non-perverse.
The key insight from Lemma~\ref{lem:necessary} is that the only candidate for a non-perverse mechanism is a generalized form of marginal-cost tolls.
Throughout this paper, we write $T(\kone,\ktwo)$ for $\kone,\ktwo\in\mathbb{R}$ to denote a taxation mechanism which assigns tolls of 
\begin{equation}
	\tau_e(f_e) = \kone \ell_e(f_e) + \ktwo f_e \ell_e'(f_e). \label{eq:gmc}
\end{equation}

\begin{lemma} \label{lem:necessary}
Let $\gee$ denote the class of all routing problems.
If network-agnostic taxation mechanism $\tmech$ has $\perv\left(\gee,\tmech\right)=1$, then it is a \emph{generalized marginal-cost taxation mechanism}, written $\tgmc$, which is $T(\kone,\ktwo)$ as in~\eqref{eq:gmc} with
%\begin{equation}
%	\tau_e(f_e) = \kone \ell_e(f_e) + \ktwo f_e \ell_e'(f_e), \label{eq:necessary}
%\end{equation}
 $\kone>-1/\bmax$, $\ktwo\geq0$, and $\ktwo\leq \kone+1/\bmax$.
\end{lemma}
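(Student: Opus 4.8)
The plan is to establish the lemma in two stages: (i) non-perversity forces $\tmech$ to be a network-agnostic mechanism of the local form $\tmech(\ell)(x)=\kone\ell(x)+\ktwo x\ell'(x)$ for some constants $\kone,\ktwo$; and (ii) these constants necessarily satisfy the three displayed inequalities. The only hypothesis available is $\perv(\gee,\tmech)=1$. Since a one-edge routing problem admits a unique feasible flow irrespective of the tolls (giving ratio exactly $1$), this hypothesis is equivalent to $\Lnf(\net,s,\tmech)\le\Lnf(\net,\emptyset)$ on \emph{every} routing problem $\net$ and every population $s$; the whole proof consists of testing this inequality on a short list of simple networks --- two- and three-edge parallel networks and a Braess-augmented network of the kind in Example~\ref{ex:braessaug}.

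\textbf{Constant latencies and the constant $\kone$.} First I would determine $\tmech$ on constant latency functions. Evaluate non-perversity on a two-edge parallel network whose edges carry constant latencies $c$ and $c+\delta$, under a homogeneous population of sensitivity $s\in(0,\bmax]$: the uninfluenced Nash flow sends all traffic to the cheaper edge, so any tolled Nash flow placing positive mass on the costlier edge strictly increases total latency and is therefore forbidden. Writing the resulting indifference and no-split conditions and letting $\delta\downarrow0$ with $s=\bmax$ forces $\tmech(\ell\equiv c)$ to be a flow-independent constant, and comparing two well-separated constant edges forces this constant to be linear in $c$; write it as $\tmech(\ell\equiv c)\equiv\kone c$. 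Then reconsidering two strictly separated constant edges $c_1<c_2$: if $1+\bmax\kone\le0$, a sensitivity-$\bmax$ user perceives the higher-latency edge as (weakly) cheaper, so there is a Nash flow with all traffic on $c_2$ --- perverse. Hence $\kone>-1/\bmax$, which is the first inequality.

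\textbf{General latencies and the marginal-cost form.} Next I would show $\tmech(\ell)(x)=\kone\ell(x)+\ktwo x\ell'(x)$ for every latency $\ell$ and flow value $x$. Fix a smooth, strictly increasing $\ell$ and a target value $x_0$, and embed the $\ell$-edge in a two-edge parallel network whose partner is an affine edge calibrated so that the uninfluenced Wardrop flow places exactly $x_0$ on the $\ell$-edge. For a homogeneous population of vanishing sensitivity $s\to0^+$, an implicit-function-theorem computation yields the direction in which the Wardrop flow moves under the infinitesimal toll, hence the first-order change of the \emph{true} total latency; requiring this to be $\le0$ across all choices of the affine partner (which can make the $\ell$-edge over- or under-used relative to the optimal flow) pins $\tmech(\ell)(x_0)$ down as a function of only $(x_0,\ell(x_0),\ell'(x_0))$, nondecreasing in $x_0\ell'(x_0)$, and bounded below by $\kone\ell(x_0)$. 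To upgrade this to the exact affine expression --- which I expect to be the hardest part of the proof --- I would combine the constant-edge normalization (fixing the value at $\ell'=0$ and identifying the same $\kone$) with a finite-sensitivity argument on a Braess-augmented network: there the toll charged along a path is the \emph{sum} of per-edge tolls, so any deviation of $\tmech(\ell)(x)$ from an affine function of $\ell(x)$ and $x\ell'(x)$ can be amplified, for the most sensitive users, into a Nash flow that diverts traffic onto a wasteful high-latency edge exactly as in Example~\ref{ex:braessaug}, contradicting non-perversity. Non-smooth latencies are handled by approximation. This gives $\tmech=T(\kone,\ktwo)$, and the monotonicity above gives $\ktwo\ge0$, the second inequality.

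\textbf{The bound $\ktwo\le\kone+1/\bmax$.} Finally, observe that for a sensitivity-$s$ user the perceived edge cost is
\begin{equation}
\ell(x)+s\,\tmech(\ell)(x)=(1+s\kone)\ell(x)+s\ktwo x\ell'(x)=(1+s\kone)\bigl(\ell(x)+\alpha_s\,x\ell'(x)\bigr),\qquad \alpha_s:=\tfrac{s\ktwo}{1+s\kone},
\end{equation}
so every user effectively faces a \emph{scaled} marginal-cost toll with scaling $\alpha_s$, and the three inequalities $\kone>-1/\bmax$, $\ktwo\ge0$, $\ktwo\le\kone+1/\bmax$ are exactly the requirement that $1+s\kone>0$ and $\alpha_s\in[0,1]$ for every $s\in(0,\bmax]$. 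Having already obtained the first two, I need only rule out $\ktwo>\kone+1/\bmax$, i.e.\ $\alpha_{\bmax}>1$. For this I would exhibit a parallel network with two increasing-latency edges (or a Braess-augmented variant) together with a homogeneous population of sensitivity $\bmax$ --- so that the effective toll is exactly an $\alpha_{\bmax}$-scaled marginal-cost toll --- whose latency functions are chosen so that this over-scaling drives the induced Nash flow strictly past the optimum and beyond the break-even flow at which true total latency falls back to its uninfluenced value, making the tolled flow strictly worse. Verifying that the overshoot can be made large enough for every $\alpha_{\bmax}>1$ is the most delicate step; together with the two stages above it completes the proof.
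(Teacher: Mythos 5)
Your overall strategy---extracting necessary conditions by testing $\Lnf(G,s,\tmech)\le\Lnf(G,\emptyset)$ on a short list of small networks---is exactly the paper's strategy, and several of your individual steps track the paper's proof: constant latencies force a flow-independent toll $\kone c$ with $\kone>-1/\bmax$; two edges with equal latency \emph{and} equal marginal cost at an optimal Nash flow must receive equal tolls, so the toll at a point can depend only on $(\ell(x),x\ell'(x))$; monotonicity in the marginal-cost coordinate gives $\ktwo\ge0$; and an over-tolling gadget (a strictly increasing edge in parallel with a constant or nearly constant one, where the untolled Nash flow is already optimal and any diversion is strictly wasteful) gives $\ktwo\le\kone+1/\bmax$. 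Your reformulation of the coefficient constraints as ``$1+s\kone>0$ and $\alpha_s\in[0,1]$ for all $s$'' is also the right way to see them.

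The genuine gap is the step you yourself flag as the hardest: upgrading ``$\tau_\ell(x)=\Phi(\ell(x),x\ell'(x))$ with $\Phi$ nondecreasing in its second argument and normalized on constants'' to the exact form $\kone\ell(x)+\ktwo x\ell'(x)$ with \emph{universal} coefficients. Your two-edge parallel gadgets cannot deliver this, because every path in them consists of a single edge, so tolls never accumulate along a path and you obtain no functional equation for $\Phi$ beyond monotonicity. You defer to a Braess-augmented network in which ``any deviation from an affine function can be amplified'' into a perverse flow, but that amplification claim is precisely what must be proven, and you give no construction. The paper's device is a network with a two-edge path (latencies $\ell_1,\ell_2$ in series) in parallel with a single edge of latency $\ell_1+\ell_2$: the untolled Nash flow $(r/2,r/2)$ is optimal, so non-perversity forces $\tmech(\ell_1)+\tmech(\ell_2)=\tmech(\ell_1+\ell_2)$, i.e.\ additivity. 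A second gadget with parallel monomials $\alpha f^d$ and $\lambda\alpha f^d$ then forces homogeneity on monomials, and additivity plus homogeneity plus a measurability/continuity assumption solve Cauchy's functional equation and yield linearity with a single constant $K=\kone+\ktwo d$ per degree; a final matching argument (pairing an arbitrary $\ell$ against a monomial with the same value and the same marginal cost at a chosen point, so that the untolled Nash flow is optimal there) transfers the formula to general latencies. Without the series-versus-sum network, or some equivalent source of additivity, your argument does not close.
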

\vspace{2mm}

\noindent The proof of Lemma~\ref{lem:necessary} appears in the Appendix.

Before continuing, note that the feasible toll-coefficient region specified in Lemma~\ref{lem:necessary} does allow {negative} tolls; in particular, $\ktwo=0$ and $\kone=-1/(2\bmax)$ is always feasible and thus non-perverse.
This can be interpreted as paying users a {subsidy} proportional to the delay on each edge.
However, note that any tolls with $\ktwo=0$ are \emph{trivial}; that is, since they are proportional to delay, they effectively merely scale all delay functions by a constant factor and thus incentivize exactly the un-influenced Nash flows.
Later, Corollary~\ref{cor:poaSL} demonstrates that the price of anarchy is minimized by maximizing $\kone$ and $\ktwo$, thus typically rendering them positive.

The space of network-agnostic taxation mechanisms is quite large, but Lemma~\ref{lem:necessary} dramatically reduces the search, allowing us to search over just two parameters, $\kone$ and $\ktwo$.

\vspace{2mm}
\noindent\emph{Proof of Theorem~\ref{thm:neg}:}
%Here we prove Theorem~\ref{thm:neg} for the case of symmetric (not necessarily parallel) networks $\gee^{\rm s}$; the proof for parallel (not necessarily symmetric) networks $\gee^{\rm p}$ is similar and uses the pathological network appearing in Figure~10 in~\cite{Brown2017a}.

Lemma~\ref{lem:necessary} rules out all taxation mechanisms other than $T(\kone,\ktwo)$ with $\kone> -1/\bmax$, and $\ktwo\leq \kone+1/\bmax$.

First, if $\ktwo=0$, this taxation mechanism assigns taxes of $\tau_e(f_e) = \kone\ell_e(f_e)$ to each edge.
When $\kone>-1/\bmax$, the edge cost functions induced are simply equal to scaled latency functions, meaning that the taxes induce only the uninfluenced Nash flows.
That is, these taxes are trivial: $\ktwo=0$ yields $\Lnf(G,s,\tmech(\kone,0))=\Lnf(G,\emptyset)$ for every network $G$ and population $s$.
For tolls to improve the price of anarchy as in~\eqref{eq:implication}, they must modify some flow on some network; the argument above shows that this requires $\ktwo>0$. Thus, for the remainder of the proof, we assume that $\ktwo>0$.

Our task is to create a user population $s$ (that is, a distribution of tax-sensitivities) and a network $G$ such that $\Lnash(G,s,\tmech(\kone,\ktwo))>\Lnash(G,\emptyset)$.
We will do this with a population having two sensitivity values $s_2>s_1>0$ and a network resembling that in Figure~\ref{fig:braessaug}.
Construct the population as follows: let a unit mass of users have sensitivity $s_1$ (which we will specify momentarily) and a unit mass have $s_2=\bmax$, for a total of 2 units of traffic.
Levy tolls of $T(\kone,\ktwo)$ on the network with $\kone>-1/\bmax$ and $\ktwo>0$.
Define $\gamma_2\triangleq\frac{s_2\ktwo}{1+s_2\kone}\in(0,1]$, and choose $s_1$ so that $\gamma_1\triangleq\frac{s_1\ktwo}{1+s_1\kone}=\gamma_2/8$.
Then an agent with sensitivity $s_i\in\{s_1,s_2\}$ experiences an effective cost function on edge $e$ of
\begin{equation}
J_e(f_e) = \ell_e(f_e) + \gamma_i f_e\ell'_e(f_e).
\end{equation}

Now, let $G$ be the network depicted in Figure~\ref{fig:braessaug}; let $\ell_{e_1}(f_{e_1})=f_{e_1}$, let $\ell_{e_4}(f_{e_4})=f_{e_4}$, let $\ell_{e_2}(f_{e_2})=\ell_{e_3}(f_{e_3})=1+\gamma_2/8$, let $\ell_{e_5}(f_{e_5}) = 0$, and let $\ell_{e_6}\left(f_{e_6}\right)=2+\gamma_2$.
Enumerate the paths as follows: denote the ``zig-zag" path $p_1 = \{e_1,e_5,e_4\}$, the remaining two paths in the upper sub-network $p_2 = \{e_1,e_3\}$ and $p_3 = \{e_2,e_4\}$, and the isolated constant-latency path $p_4 = \{e_6\}$; and denote the path flow of $p_i$ by $f_i$.
We will refer to paths $p_1, p_2,$ and $p_3$ in the upper subnetwork as the ``Braess subnetwork."

On this network, the flow (depicted on the right in Figure~\ref{fig:braessaug}) $\fperv\triangleq(1,0,0,1)$ is a Nash flow for this population with all of population $1$ choosing the zig-zag path and all of population $2$ choosing path $4$.
This is because for population $i\in\{1,2\}$, the effective cost of the four paths under $\fperv$ are $2(1+\gamma_i)$, $(2+\gamma_1+\gamma_i)$,  $(2+\gamma_1+\gamma_i)$, and $(2+\gamma_2)$, respectively.
Since $\gamma_1=\gamma_2/8$, this means that under $\fperv$, population~$1$ weakly prefers $p_1$, and population~$2$ strictly prefers $p_4$.
This flow has total latency $\Lnash(G,s,\tmech)=4+\gamma_2$.

However, it can be verified that if tolls are removed, the unique Nash flow is $\fnash\triangleq(\gamma_2/4,1-\gamma_2/8,1-\gamma_2/8,0)$, which has a total latency of $\Lnash(G,\emptyset)=4+\gamma_2/2$, or
\begin{equation}
\Lnash(G,s,\tmech)>\Lnash\left(G,\emptyset\right)
\end{equation}
and the considered tolls are perverse.
%% DO NOT ERASE: Derivations are all found on 11/17/17 p. 1-2.
\hfill\qed

\subsection{Parallel networks prevent perverse incentives}

Theorem~\ref{thm:neg} and its proof show that it does not take much complexity to render a network-agnostic taxation mechanism perverse.
Does this mean that it is never possible to achieve a perversity index of $1$ on any class of networks?
Fortunately, the answer is no -- and our Theorem~\ref{thm:nonperv} shows that on parallel networks, %
%(precisely the intersection of $\gee^{\rm s}$ and $\gee^{\rm p}$), 
the necessary condition from Lemma~\ref{lem:necessary} is also sufficient to achieve a perversity index of $1$.
Thus, Theorem~\ref{thm:nonperv} gives a full characterization of non-perverse taxation mechanisms for parallel networks.

\begin{theorem} \label{thm:nonperv}
Let $\gee^{\rm p}$ denote the class of routing problems with parallel networks.
For any $\bmax\geq\bmin\geq0$, a network-agnostic taxation mechanism $\tmech$ has unity perversity index on~$\gee^{\rm p}$, i.e.,
\begin{equation}
\perv\left(\gee^{\rm p},\tmech\right)=1,
\end{equation}
if and only if it is $\tgmc$, assigning the tolling functions
%if and only if it assigns \emph{generalized marginal-cost} tolls% $\tmech(\kone,\ktwo)$
\begin{equation}
	\tau_e(f_e) = \kone \ell_e(f_e) + \ktwo f_e \ell_e'(f_e), \label{eq:nonperv}
\end{equation}
with $\kone> -1/\bmax$, $\ktwo\geq0$, and $\ktwo\leq \kone+1/\bmax$.
\end{theorem}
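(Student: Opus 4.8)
\emph{Proof plan.} The ``only if'' direction follows from Lemma~\ref{lem:necessary}: the counterexample networks used to establish that lemma can all be taken to be parallel, so $\perv(\gee^{\rm p},\tmech)=1$ already forces $\tmech$ to be $\tgmc$ with $\kone>-1/\bmax$, $\ktwo\ge 0$ and $\ktwo\le\kone+1/\bmax$ (if the appendix argument is stated for general networks, repeating it on parallel instances gives the same conclusion). The substance is the ``if'' direction: one must show $\Lnf(\net,\sdist,\tgmc)\le\Lnf(\net,\emptyset)$ for every parallel routing problem $\net$ and every population $\sdist$ (the reverse bound $\perv(\gee^{\rm p},\tgmc)\ge 1$ is trivial, e.g.\ from a single-link instance where tolls change nothing). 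The first step is a reduction: on a parallel network each path $p$ may be treated as one ``link'' with latency $\ell_p=\sum_{e\in p}\ell_e$, and the toll aggregates as $\tau_p(f_p)=\kone\ell_p(f_p)+\ktwo f_p\ell_p'(f_p)$, again of generalized-marginal-cost form; a user $x$ with sensitivity $s_x$ then perceives on link $i$ the cost $(1+s_x\kone)\ell_i(f_i)+s_x\ktwo f_i\ell_i'(f_i)$. Since $s_x\le\bmax$ and $\kone>-1/\bmax$ we have $1+s_x\kone>0$, and dividing by this positive (user-dependent) constant does not change any user's preference over links, so the tolled game is equivalent to a \emph{biased} routing game in which user $x$ perceives $\ell_i(f_i)+\gamma_x f_i\ell_i'(f_i)$ with $\gamma_x=s_x\ktwo/(1+s_x\kone)$. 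Since $d\gamma_x/ds_x=\ktwo/(1+s_x\kone)^2\ge 0$, the biases range over $[0,\gamma^+]$ with $\gamma^+=\bmax\ktwo/(1+\bmax\kone)$, and the constraints $\ktwo\ge0$, $\ktwo\le\kone+1/\bmax$ are \emph{exactly} the statement that $\gamma_x\in[0,1]$ for all users. So it suffices to prove: on any parallel network, any biased Nash flow with biases in $[0,1]$ has total latency at most that of the uninfluenced Nash flow.

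\emph{Core inequality.} Let $f$ be the biased Nash flow and $g$ the uninfluenced one; recall that on parallel networks $g$ assigns every used link the common latency $L_g$ and every unused link a latency $\ell_i(0)\ge L_g$. If $f=g$ we are done, so set $A^+=\{i:f_i>g_i\}$ and $A^-=\{i:f_i<g_i\}$, both nonempty with $\sum_{A^+}(f_i-g_i)=\sum_{A^-}(g_i-f_i)>0$. Monotonicity of $\ell_i$ and the structure of $g$ give $\ell_i(f_i)\ge L_g$ for $i\in A^+$ and $\ell_j(f_j)\le L_g$ for $j\in A^-$. Write $\mu_i(x)=\ell_i(x)+x\ell_i'(x)$, so that $x\ell_i(x)=\int_0^x\mu_i(t)\,dt$ is convex (its second derivative is $2\ell_i'+x\ell_i''\ge0$). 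The heart of the argument is the claim that $\mu_i(f_i)\le\mu_j(f_j)$ whenever $i\in A^+$ and $j\in A^-$.

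\emph{Finishing, assuming the claim.} Pick any constant $\lambda$ with $\max_{i\in A^+}\mu_i(f_i)\le\lambda\le\min_{j\in A^-}\mu_j(f_j)$. Convexity of $x\mapsto x\ell_i(x)$ gives $f_i\ell_i(f_i)-g_i\ell_i(g_i)\le\mu_i(f_i)(f_i-g_i)$ for each $i$, and summing,
\begin{equation}
\Lnf(\net,\sdist,\tgmc)-\Lnf(\net,\emptyset)\ \le\ \sum_i\mu_i(f_i)(f_i-g_i)\ =\ \sum_i\bigl(\mu_i(f_i)-\lambda\bigr)(f_i-g_i)\ \le\ 0,
\end{equation}
where the middle equality uses $\sum_i(f_i-g_i)=0$ and the last inequality holds termwise: on $A^+$, $\mu_i(f_i)-\lambda\le0$ while $f_i-g_i>0$; on $A^-$, $\mu_j(f_j)-\lambda\ge0$ while $f_j-g_j<0$; and the factor $f_i-g_i$ vanishes elsewhere. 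This yields $\perv(\gee^{\rm p},\tgmc)\le1$, completing the proof.

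\emph{The main obstacle} is the cross-class claim $\mu_i(f_i)\le\mu_j(f_j)$. It is clean when link $i\in A^+$ is used in $f$ by \emph{some} user $x$ with $\gamma_x>0$: writing the perceived cost as the convex combination $(1-\gamma_x)\ell_i(f_i)+\gamma_x\mu_i(f_i)$, the Nash condition against link $j$ together with $\ell_i(f_i)\ge L_g\ge\ell_j(f_j)$ and $\gamma_x\le1$ forces $\mu_i(f_i)\le\mu_j(f_j)$. The delicate cases are when link $i$ carries only unbiased traffic ($\gamma_x=0$), when the latency functions have flat segments, or when $f_j=0$ for some $j\in A^-$; here one must argue from the Nash condition that the common perceived value of the unbiased users equals $L_g$ (else they would profitably move onto an $A^-$ link), which pins down $\ell_i'(f_i)=0$ and hence $\mu_i(f_i)=L_g$, with a symmetric argument giving $\mu_j(f_j)=L_g$. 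Making this boundary analysis airtight is the real work, and it is exactly where heterogeneity and the bound $\gamma_x\le1$ (equivalently $\ktwo\le\kone+1/\bmax$) are indispensable: for a \emph{homogeneous} bias $\gamma\in[0,1]$ the statement has a two-line proof valid on \emph{all} networks via the Beckmann-type potentials $\Phi_0(f)=\sum_e\int_0^{f_e}\ell_e(t)\,dt$ and $\Phi_0(f)+\gamma\sum_e\int_0^{f_e}t\,\ell_e'(t)\,dt$ (whose minimizers are $g$ and $f$, and whose value at $\gamma=1$ equals the total latency), whereas Theorem~\ref{thm:neg} shows the heterogeneous statement genuinely fails once the parallel hypothesis is dropped.
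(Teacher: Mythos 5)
Your proposal is correct in outline and takes a genuinely different route from the paper's. The paper proves the ``if'' direction via Lemma~\ref{lem:sufficient}: a homotopy $\Sigma(\alpha,s_0)$ in population space that converts the most-sensitive users into least-sensitive ones, combined with ordering claims on latencies and marginal costs across paths (Claim~\ref{claim:ordering}) to show $\frac{\partial}{\partial\alpha}\L(\fnash(\alpha))\geq0$; applying the resulting monotonicity twice (arbitrary $s\to$ homogeneous $\bmin\to$ homogeneous $0=$ untolled) yields $\Lnash(G,s,\tgmc)\leq\Lnash(G,\emptyset)$. You instead compare the tolled Nash flow $f$ directly to the untolled flow $g$ through convexity of $x\mapsto x\ell_i(x)$ and the cross-ordering $\mu_i(f_i)\leq\mu_j(f_j)$ between links that gained and lost flow. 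This is more elementary, avoids the regularity question of whether $\fnash(\alpha)$ is actually differentiable in $\alpha$ (which the paper glosses over), and your convex-combination derivation makes transparent exactly where $\gamma_x\leq1$, i.e.\ $\ktwo\leq\kone+1/\bmax$, is used. What it does not buy you is the full monotonicity statement of Lemma~\ref{lem:sufficient}, which the paper reuses to locate the worst case at the homogeneous-$\bmin$ population in Theorem~\ref{thm:poaSL} and Lemma~\ref{lem:pervSU}; your route would need a separate argument there. Your treatment of the ``only if'' direction (all gadgets in the proof of Lemma~\ref{lem:necessary} are parallel networks) matches what the paper implicitly relies on.

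The gap you flag is smaller than you fear, and closable. If $\ktwo=0$ the tolls only rescale latencies and $f=g$, so assume $\ktwo>0$; then $\gamma_x=0$ only for users with $s_x=0$, which the paper's standing assumption makes a null set, so every link $i\in A^+$ (having $f_i>g_i\geq0$, hence positive flow) carries some user with $\gamma_x>0$ and your ``clean'' inequality applies. That inequality is indifferent to flat latency segments, and it survives $f_j=0$ for $j\in A^-$ because the Nash condition still bounds that user's cost by the cost of path $j$ evaluated at flow zero, while $\ell_j(0)\leq\ell_j(g_j)=L_g$ continues to hold. (Even without the null-set assumption, a link in $A^+$ loaded only by $\gamma=0$ users forces $\ell_i(f_i)=L_g=\ell_j(f_j)$, whence convexity and continuity of the derivatives give $\ell_i'(f_i)=\ell_j'(f_j)=0$ and $\mu_i(f_i)=\mu_j(f_j)=L_g$.) Your parenthetical potential argument for homogeneous $\gamma\in[0,1]$ on general networks is also sound: $f^\gamma$ minimizes $(1-\gamma)\Phi_0+\gamma\L$ while $g$ minimizes $\Phi_0$, so $\gamma\L(f^\gamma)\leq(1-\gamma)\bigl(\Phi_0(g)-\Phi_0(f^\gamma)\bigr)+\gamma\L(g)\leq\gamma\L(g)$.
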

\noindent The proof of Theorem~\ref{thm:nonperv} appears in the Appendix.

Note that if $\kone=0$, then for any $\ktwo\geq0$, the above corresponds to simple scaled marginal-cost tolls.
In this case, the coefficient constraints reduce to $\ktwo\in[0,1/\bmax]$; that is, Theorem~\ref{thm:nonperv} states that scaled marginal-cost taxes have a perversity of $1$ if and only if they are scaled \emph{conservatively,} i.e., they are no larger than would be required to induce optimal flows for a homogeneous population of high sensitivity $\bmax$.

Furthermore, note that all of the results in this paper which are stated for {parallel} networks also hold for networks composed of several parallel networks in series.
\vspace{-2mm}

\subsection{The price of anarchy of non-perverse tolls}

Having shown that parallel networks do admit non-perverse taxation mechanisms, we now ask how effective those mechanisms are in reducing worst-case congestion.
Simply because taxes have a perversity index of $1$ does not immediately imply that their associated PoA is small; nonetheless, we show that generalized marginal-cost tolls can provide modest reductions of worst-case congestion.

\begin{theorem}\label{thm:poaSL}
Let $\gee^{\rm p}_d$ denote the class of all parallel networks with polynomial latency functions of degree at most $d\geq1$.
For any $\bmax\geq\bmin\geq0$, levy the generalized marginal-cost taxation mechanism $\tgmc$ as defined in~\eqref{eq:nonperv} with coefficients $\kone\geq-1/\bmax$, $\ktwo\geq0$, and $\ktwo\leq \kone+1/\bmax$.
Let $\beta_{\kone,\ktwo}\triangleq\frac{\ktwo\bmin}{1+\kone\bmin}\in[0,1]$.
Then the price of anarchy associated with these tolls is
\begin{equation} \label{eq:poaSL}
\poa\left(\gee^{\rm p}_d,\tgmc\right) = \frac{1}{1+d\beta_{\kone,\ktwo}-d\left(\frac{1+d\beta_{\kone,\ktwo}}{1+d}\right)^{\frac{d+1}{d}}}.
\end{equation}
\end{theorem}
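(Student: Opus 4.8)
The plan is to turn the tolled game into an equivalent ``biased'' routing game, reduce the worst-case population to a homogeneous one, and then determine the price of anarchy of the resulting biased game on parallel networks via a smoothness-type upper bound together with a matching Pigou-type example. For the first step, observe that a user of sensitivity $s$ perceives edge cost $\ell_e(f_e)+s\tau_e(f_e)=(1+s\kone)\big[\ell_e(f_e)+\gamma(s)f_e\ell_e'(f_e)\big]$, where $\gamma(s)\triangleq s\ktwo/(1+s\kone)$. The coefficient constraints give $1+s\kone>0$ and $\gamma(s)\in[0,1]$ for every admissible $s$ (if $\kone=-1/\bmax$ the constraints force $\ktwo=0$, the tolls are trivial, and the claim reduces to the classical degree-$d$ price of anarchy), and $\gamma$ is nondecreasing in $s$. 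Scaling a user's cost by the positive type-dependent constant $(1+s\kone)$ leaves that type's best responses unchanged, so the Nash flows of $(G,s,\tgmc)$ are exactly the Nash flows of the game in which type $s$ sees the ``$\gamma(s)$-biased'' latency $\ell_e+\gamma(s)f_e\ell_e'$; as $s$ ranges over $[\bmin,\bmax]$ the biases sweep an interval $[\beta,\gamma_U]\subseteq[\beta,1]$ with $\beta\triangleq\beta_{\kone,\ktwo}=\gamma(\bmin)$.

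Next I would show that on parallel networks the supremum in~\eqref{eq:poadef} is attained at the homogeneous population $s\equiv\bmin$, so that every user has bias exactly $\beta$. For a homogeneous bias $\gamma$, the biased Nash flow on a parallel network minimizes the blended potential $(1-\gamma)\sum_e\int_0^{f_e}\ell_e(u)\,du+\gamma\L(f)$ (via the integration-by-parts identity $\int_0^x u\ell'(u)\,du=x\ell(x)-\int_0^x\ell(u)\,du$), hence it interpolates between the uninfluenced Nash flow ($\gamma=0$) and the optimum ($\gamma=1$), and a standard monotone-comparative-statics argument on this one-parameter family shows that the total latency of the biased Nash flow is nonincreasing in $\gamma$; thus the smallest admissible bias $\beta$ is worst among homogeneous populations. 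It then remains to rule out heterogeneous populations doing worse, i.e., to show that on a parallel network any heterogeneous population produces total latency at most that of the homogeneous population at its lowest present sensitivity (hence at most that of $s\equiv\bmin$). I expect this heterogeneity reduction --- which is genuinely particular to parallel networks, since the variational inequality for heterogeneous populations couples a \emph{per-edge} average bias in one term with a \emph{global} average bias in another --- to be the main obstacle. I would attack it by ``freezing'' the flow of every user whose bias exceeds $\beta$: replace each $\ell_e(\cdot)$ by the shifted latency $\ell_e(\,\cdot+h_e)$, where $h_e$ is the frozen high-bias flow on $e$, obtaining a homogeneous bias-$\beta$ instance whose Nash flow is the restriction of the original one, and verify that this transformation cannot increase the optimal cost; alternatively, via a direct edgewise comparison of the heterogeneous Nash flow with the homogeneous bias-$\beta$ Nash flow on the same network.

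Finally I would compute $\sup_{G\in\gee^{\rm p}_d}\L(f)/\Lopt(G)$ for the homogeneous bias-$\beta$ Nash flow $f$. For the upper bound, rearranging the parallel-network variational inequality $\sum_e[\ell_e(f_e)+\beta f_e\ell_e'(f_e)](f^{*}_e-f_e)\geq0$ yields $\L(f)\leq\sum_e\big[f^{*}_e\ell_e(f_e)+\beta f^{*}_e f_e\ell_e'(f_e)-\beta f_e^{2}\ell_e'(f_e)\big]$. Both sides are linear in each $\ell_e$, so it suffices to establish, for every monomial $\ell_e(x)=x^{k}$ with $0\leq k\leq d$ and all $x,y\geq0$, the inequality $(1+\beta k)x^{k}y-\beta k\,x^{k+1}\leq y^{k+1}+\mu\,x^{k+1}$ with $\mu\triangleq\mu_d$, where $\mu_k\triangleq\max_{t\geq0}\big[(1+\beta k)t-\beta k-t^{k+1}\big]=k\big[(\tfrac{1+\beta k}{k+1})^{(k+1)/k}-\beta\big]$. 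The coefficient of $y^{k+1}$ is forced to equal $1$ by the constant-latency case $k=0$ --- which is exactly what makes the bound tight --- and a routine monotonicity check shows $\mu_k$ is nondecreasing in $k$, so $k=d$ is the binding exponent. Summing the per-edge inequalities gives $(1-\mu_d)\L(f)\leq\Lopt(G)$, and $1/(1-\mu_d)$ is precisely the right-hand side of~\eqref{eq:poaSL}. For the matching lower bound I would use the two-edge Pigou network with $\ell_1\equiv1+d\beta$ constant and $\ell_2(x)=x^{d}$, unit traffic, and population $s\equiv\bmin$: the unique biased Nash flow routes all traffic onto $e_2$ (whose biased cost $(1+d\beta)x^{d}$ stays below $1+d\beta$ for $x<1$ and meets it at $x=1$), giving total latency $1$, whereas the optimum puts $w\triangleq(\tfrac{1+d\beta}{1+d})^{1/d}\in(0,1]$ on $e_2$, giving total latency $(1+d\beta)\big(1-\tfrac{d}{d+1}w\big)$; the ratio of these equals the right-hand side of~\eqref{eq:poaSL}, closing the gap.
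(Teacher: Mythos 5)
Your overall architecture matches the paper's: rescale each type's cost by the positive constant $(1+s\kone)$ to obtain the biased latencies $\ell_e+\gamma(s)f_e\ell_e'$ with $\gamma(s)=s\ktwo/(1+s\kone)\in[0,1]$ nondecreasing in $s$; argue that the worst case over populations is the homogeneous one at $s\equiv\bmin$ (bias $\beta_{\kone,\ktwo}$); then evaluate the homogeneous price of anarchy. Your third step is carried out differently from the paper, and correctly: the paper simply invokes the Meir--Parkes bound (its Lemma~\ref{lem:meir}), whereas you rederive it from the variational inequality via the per-monomial inequality $(1+\beta k)x^k y-\beta k x^{k+1}\le y^{k+1}+\mu_k x^{k+1}$ with $\mu_k=k\bigl[\bigl(\tfrac{1+\beta k}{k+1}\bigr)^{(k+1)/k}-\beta\bigr]$, together with the matching Pigou instance; I verified that $1/(1-\mu_d)$ equals the right-hand side of~\eqref{eq:poaSL} and that your Pigou example attains it, so the lower bound is unconditionally established. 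The blended-potential monotonicity argument for homogeneous bias is also sound.

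The gap is exactly where you flagged it: the reduction from heterogeneous populations to the homogeneous population at $\bmin$. This is the paper's Lemma~\ref{lem:sufficient}, and it is the substantive content of the proof --- the paper establishes it by continuously converting mass from the highest to the lowest sensitivity, proving an ordering of latencies and marginal costs across used paths at any Nash flow on a parallel network, showing that only the extreme ``strategically coupled'' path groups respond to the conversion, and concluding that total latency is nondecreasing along the homogenization. Your proposed ``freezing'' attack does not close this: fixing the high-bias flow $h_e$ and passing to latencies $\ell_e(\cdot+h_e)$ does yield a homogeneous instance whose Nash flow is the restriction of the heterogeneous one, but the comparison you then need is against the homogeneous-$\bmin$ Nash flow of the \emph{original} instance at the \emph{full} rate $r$, not against the optimum of the shifted instance; ``this transformation cannot increase the optimal cost'' bounds the wrong quantity, and $h_e$ is itself equilibrium-dependent. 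The fallback ``direct edgewise comparison'' is a restatement of the goal, not an argument. Until this lemma is supplied, your argument proves only that the right-hand side of~\eqref{eq:poaSL} is a lower bound on $\poa\left(\gee^{\rm p}_d,\tgmc\right)$ and an upper bound on its restriction to homogeneous populations; the claimed equality remains open.
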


We provide a new proof of Theorem~\ref{thm:poaSL} in the Appendix that relies on our arguments for Theorem~\ref{thm:nonperv}, but note that it is also a consequence of~\cite[Theorem 7.1]{Chen2014}, which gives the price of anarchy associated with heterogeneous, partially-altruistic populations.
Though the two proofs are substantially different, they share the high-level idea that on parallel networks, increasing the fraction of players that are merely delay-averse always leads to worse congestion.
This implies (in our model) that the price of anarchy is realized by a homogeneous population with sensitivity equal to $\bmin$, and the expression in~\eqref{eq:poaSL} due to~\cite{Meir2015b} applies immediately.

Furthermore, this yields the following simple characterization of the PoA-minimizing coefficients $\kone$ and $\ktwo$.
Before stating the result, we point out that worst-case performance guarantees provided by a taxation mechanism can often be improved by increasing all edge tolls appropriately (see detailed discussion in~\cite{Brown2017d}).
In order to make meaningful statements about congestion-minimizing tolls, it is useful to parameterize tolls by a stylized upper-bound; the parameter $\Kmax>0$ plays this role in the following result.
Thus, Corollary~\ref{cor:poaSL} solves the following optimization problem:
\begin{equation}\label{eq:poaMin}
(\kone^*,\ktwo^*)\in\arginf_{\kone,\ktwo\leq\Kmax} \poa\left(\gee^{\rm p}_d,\tgmc\right).
\end{equation}

\begin{corollary} \label{cor:poaSL}
For any $d\geq1$ and taxation coefficient upper bound $\Kmax>0$, the price of anarchy in~\eqref{eq:poaMin} due to bounded $\tgmc$ is minimized by setting $\ktwo^*=\Kmax$ and $\kone^*=\Kmax-1/\bmax$.
If $\bmax=+\infty$, then this simplifies to $\kone^*=\ktwo^*=\Kmax$.
\end{corollary}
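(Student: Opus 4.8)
The plan is to collapse the two-parameter minimization in~\eqref{eq:poaMin} into a one-dimensional monotonicity statement in the scalar $\beta_{\kone,\ktwo}=\frac{\ktwo\bmin}{1+\kone\bmin}$, and then to maximize $\beta_{\kone,\ktwo}$ over the feasible region $\mathcal{F}=\{(\kone,\ktwo):\kone>-1/\bmax,\ 0\le\ktwo\le\kone+1/\bmax,\ \kone\le\Kmax,\ \ktwo\le\Kmax\}$. First I would show that the price of anarchy is strictly decreasing in $\beta_{\kone,\ktwo}$. By Theorem~\ref{thm:poaSL}, $\poa(\gee^{\rm p}_d,\tgmc)=1/g(\beta_{\kone,\ktwo})$ where $g(\beta)=1+d\beta-d\left(\frac{1+d\beta}{1+d}\right)^{(d+1)/d}$; differentiating gives $g'(\beta)=d\left[1-\left(\frac{1+d\beta}{1+d}\right)^{1/d}\right]$, which is nonnegative for $\beta\in[0,1]$ (since there $\frac{1+d\beta}{1+d}\le 1$) and strictly positive for $\beta<1$. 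Hence $g$ is strictly increasing on $[0,1]$, so $\poa(\gee^{\rm p}_d,\tgmc)$ is strictly decreasing in $\beta_{\kone,\ktwo}$, and the problem~\eqref{eq:poaMin} is equivalent to maximizing $\beta_{\kone,\ktwo}$ over $\mathcal{F}$.

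The remaining step is to solve this maximization. If $\bmin=0$ it is trivial: then $\beta_{\kone,\ktwo}\equiv 0$, every feasible pair lies in the $\arginf$ set, and in particular $(\kone^*,\ktwo^*)=(\Kmax-1/\bmax,\Kmax)$ does, since it is feasible whenever $\Kmax>0$. For $\bmin>0$, note first that $1+\kone\bmin>0$ on $\mathcal{F}$ (because $\kone>-1/\bmax\ge-1/\bmin$), so for fixed $\kone$ the quantity $\beta_{\kone,\ktwo}$ strictly increases in $\ktwo$; thus an optimizer takes $\ktwo=\min\{\Kmax,\kone+1/\bmax\}$. I would then split on this minimum: on the branch $\kone\le\Kmax-1/\bmax$ (so $\ktwo=\kone+1/\bmax$), the map $\kone\mapsto\frac{(\kone+1/\bmax)\bmin}{1+\kone\bmin}$ has derivative $\frac{\bmin(1-\bmin/\bmax)}{(1+\kone\bmin)^2}\ge 0$ (using $\bmin\le\bmax$), so it is maximized at $\kone=\Kmax-1/\bmax$; on the branch $\kone\ge\Kmax-1/\bmax$ (so $\ktwo=\Kmax$), the map $\kone\mapsto\frac{\Kmax\bmin}{1+\kone\bmin}$ is strictly decreasing, so it is again maximized at $\kone=\Kmax-1/\bmax$. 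The two branches agree there, giving the global maximizer $\kone^*=\Kmax-1/\bmax$, $\ktwo^*=\Kmax$, which lies on the boundary $\ktwo=\kone+1/\bmax$ of $\mathcal{F}$ and so is feasible. Specializing to $1/\bmax=0$ yields $\kone^*=\ktwo^*=\Kmax$ when $\bmax=+\infty$.

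The argument is elementary throughout; the only step requiring care is the branch analysis — checking the sign of $\frac{\bmin(1-\bmin/\bmax)}{(1+\kone\bmin)^2}$ and confirming that the two one-sided maximizations meet exactly at $(\Kmax-1/\bmax,\Kmax)$ rather than at an interior critical point of either piece. I would also verify that the infimum in~\eqref{eq:poaMin} is in fact attained — it is, since the active part of the boundary, $\ktwo=\kone+1/\bmax$, is closed and the optimum stays away from the open constraint $\kone>-1/\bmax$ whenever $\Kmax>0$ — so that ``$\arginf$'' is genuinely realized by the stated pair.
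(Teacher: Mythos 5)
Your proposal is correct and follows essentially the same route as the paper's (much terser) proof: observe that the price of anarchy in~\eqref{eq:poaSL} is decreasing in $\beta_{\kone,\ktwo}$ and that $\beta_{\kone,\ktwo}$ is maximized by saturating the constraints $\ktwo\leq\kone+1/\bmax$ and $\ktwo\leq\Kmax$. Your added details --- the explicit derivative $g'(\beta)=d\bigl[1-\bigl(\tfrac{1+d\beta}{1+d}\bigr)^{1/d}\bigr]$, the two-branch maximization of $\beta_{\kone,\ktwo}$, the $\bmin=0$ degenerate case, and attainment of the infimum --- are all correct verifications of steps the paper asserts without proof.
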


\begin{proof}
The PoA expression in~\eqref{eq:poaSL} is decreasing in $\beta_{\kone,\ktwo}$, which for any $\bmin,\bmax$ and fixed $\Kmax$ is maximized by saturating the bounds $\kone\geq\ktwo-1/\bmax$ and $\ktwo\leq\Kmax$.
\end{proof}

The price of anarchy due to tolls as in Corollary~\ref{cor:poaSL} is plotted for several values of $d$ in Figure~\ref{fig:nonpervPoa}.
Note that even when $d$ is unbounded (the dotted red curve in Figure~\ref{fig:nonpervPoa}), the price of anarchy is bounded whenever $\bmin/\bmax>0$.

In the special case of $\bmax=+\infty$ (that is, no upper bound on sensitivity is known), the PoA-minimizing coefficients in Corollary~\ref{cor:poaSL} reduce to $\kone=\ktwo=\Kmax$.
Note that this is identically the universal taxation mechanism from~\cite{Brown2017d}, which was developed to serve an entirely different purpose of optimizing the price of anarchy in the large-$\Kmax$ limit.

\begin{figure}
%\vspace{2mm}
\centerline{\includegraphics[scale=0.27]{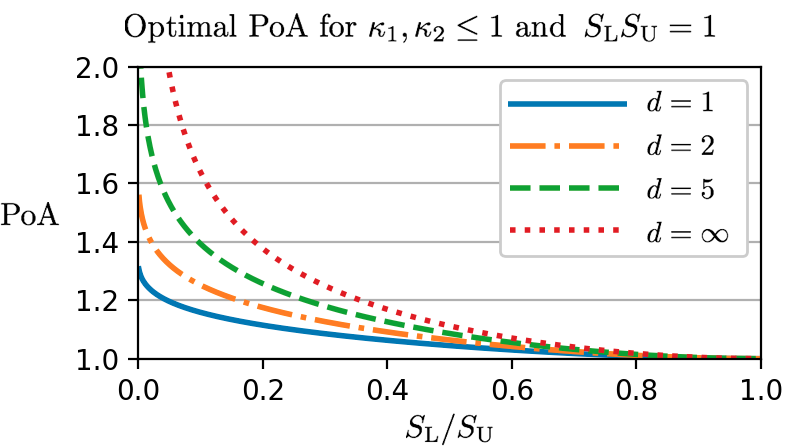}}
\vspace{-.8em}
\caption{The optimal price of anarchy achievable using non-perverse network-agnostic tolls, where $d$ indicates the largest degree of polynomial allowed in the considered latency functions.
These values are plotted using the machinery of Theorem~\ref{thm:poaSL}.
The PoA is plotted with respect to $\bmin/\bmax$, which can serve as a proxy for the variance of the price sensitivities in the user population.
On the far left, the price of anarchy resolves to the un-tolled value; on the right, the price of anarchy is $1$.
Our result continuously bridges the space in between.}
% generated by running  plotPerv.py in my routing-games repo on branch nonperv-poa
\label{fig:nonpervPoa}
\vspace{-4mm}
\end{figure}

%\section{The tradeoff between optimality and perversity} \label{sec:poa}

\section{Optimality maximizes perversity} \label{sec:poa} % \label{ssec:optreqperv}

In this section we initiate a study on the tradeoff between optimizing for the perversity index and optimizing for the PoA, and show that the choice of coefficients for $\tmech(\kone,\ktwo)$ which minimize the price of anarchy for homogeneous populations on general networks necessarily has a perversity index equal to the price of anarchy.
That is, minimizing the price of anarchy leads to maximizing the perversity index.
%In Theorem~\ref{thm:opt}, we show that this optimal taxation mechanism is strictly ``more aggressive'' than the optimal non-perverse mechanism from Corollary~\ref{cor:poaSL}, in the sense 
%Intuitively, Theorem~\ref{thm:opt} shows that the PoA-minimizing taxation mechanism (for homogeneous populations) is the taxation mechanism that perfectly balances the harm that can be caused by a homogeneous $\bmin$ population with the harm that can be caused by a homogeneous $\bmax$ population.
%Crucially, this taxation mechanism has a perversity index equal to the price of anarchy and strictly greater than 1.

\begin{theorem} \label{thm:opt}
Let $\gee_d$ denote the class of all networks with polynomial latency functions of degree at most $d\geq1$.
For homogeneous populations, for any $0\leq\bmin<\bmax$ and taxation coefficient upper bound $\Kmax$, the PoA-minimizing toll scalars
\begin{equation}
(\kone^*,\ktwo^*) \triangleq \arginf_{\kone,\ktwo\leq\Kmax} \poa^{\rm hm}\left( \gee_d, \tmech(\kone,\ktwo) \right)
\end{equation}
satisfy the following:
\begin{align}
\poa^{\rm hm}\left(\gee_d,\tmech(\kone^*,\ktwo^*)\right) = \perv^{\rm hm}\left(\gee_d,\tmech(\kone^*,\ktwo^*)\right) >1. \label{eq:tradeoff}
\end{align}
%\vspace{2mm}
\end{theorem}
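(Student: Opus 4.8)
\emph{Proof sketch.} The plan is to reduce the homogeneous problem to a biased congestion game, to pin down the price of anarchy of the optimal tolls, and then to exhibit a network on which the perversity those tolls induce equals their price of anarchy. First, for a homogeneous population of sensitivity $s\in[\bmin,\bmax]$ the mechanism $\tmech(\kone,\ktwo)$ gives each user the effective edge cost $(1+s\kone)\ell_e(f_e)+s\ktwo f_e\ell_e'(f_e)$; dividing by $1+s\kone>0$ shows that the induced Nash flow is exactly the Nash flow of the congestion game with edge costs $\ell_e(f_e)+\gamma(s)f_e\ell_e'(f_e)$, where $\gamma(s)\triangleq s\ktwo/(1+s\kone)$ is strictly increasing in $s$ whenever $\ktwo>0$. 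Writing $\rho_d(\gamma)$ for the worst-case ratio over $\gee_d$ of the total latency of a $\gamma$-biased Nash flow to $\Lopt$ -- this is the ``partially altruistic'' price of anarchy of~\cite{Chen2014,Meir2015b} -- the function $\rho_d$ is continuous, equals $1$ exactly at $\gamma=1$ (marginal-cost tolls are optimal on every network), and strictly exceeds $1$ for every $\gamma\neq1$. Since $\gamma(\cdot)$ monotonically sweeps the interval $[\gamma(\bmin),\gamma(\bmax)]$, one obtains $\poa^{\rm hm}(\gee_d,\tmech(\kone,\ktwo))=\max_{\gamma\in[\gamma(\bmin),\gamma(\bmax)]}\rho_d(\gamma)$, attained at some $\gamma^\dagger$ realized by a sensitivity $s^\dagger\in[\bmin,\bmax]$.

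Second, a non-trivial toll tuned toward $\gamma=1$ -- for instance $\ktwo=\Kmax$ and $\kone=\Kmax-1/\bmax$, which makes $\gamma(\bmax)=1$ -- strictly improves on trivial tolls ($\ktwo=0$, which only recover $\rho_d(0)$), so the PoA-minimizer has $\ktwo^*>0$ and therefore $\gamma(\bmin)<\gamma(\bmax)$. This nondegenerate interval necessarily contains points other than $\gamma=1$, where $\rho_d>1$, so $\poa^{\rm hm}(\gee_d,\tmech(\kone^*,\ktwo^*))=\rho_d(\gamma^\dagger)>1$, which gives the strict part of~\eqref{eq:tradeoff}.

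Third, the bound $\perv^{\rm hm}\leq\poa^{\rm hm}$ is automatic from $\Lnf(G,\emptyset)\geq\Lopt(G)$, so it remains to construct networks $G_n\in\gee_d$ with $\Lnf(G_n,\emptyset)=\Lopt(G_n)$ on which the $\gamma^\dagger$-biased Nash flow is as inefficient as the price of anarchy allows, i.e.\ $\Lnf(G_n,s^\dagger,\tmech(\kone^*,\ktwo^*))/\Lopt(G_n)\to\rho_d(\gamma^\dagger)$. On each such $G_n$ the perversity ratio coincides with the PoA ratio, so $\perv^{\rm hm}\geq\rho_d(\gamma^\dagger)=\poa^{\rm hm}$, forcing equality. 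The families of networks on which partial altruism -- equivalently, a bias $\gamma^\dagger\in(0,1)$, or an over-shooting bias $\gamma^\dagger>1$ -- degrades congestion are the natural starting point; for instance, the augmented Braess network of Example~\ref{ex:braessaug} already has un-influenced latency equal to its optimal latency, and one would iterate or compose such gadgets (or splice parallel constant-latency edges and zero-latency connectors into a known near-extremal network for $\rho_d(\gamma^\dagger)$, calibrated so that the un-influenced equilibrium collapses onto the optimal flow) to push the ratio up to $\rho_d(\gamma^\dagger)$.

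The main obstacle is this last step: one must realize the worst case for $\gamma^\dagger$-biased routing on a network whose un-influenced Nash flow is \emph{exactly} optimal, so that deleting the tolls recovers the optimum rather than a merely near-optimal flow. Keeping the augmenting constant-latency edges saturated at precisely the split that makes the un-influenced equilibrium optimal -- while they carry no harmful flow under the biased tolls -- uniformly as the extremal inefficiency ratio is approached is the delicate point; steps one and two are routine bookkeeping by comparison.
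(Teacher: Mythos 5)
Your outline matches the paper's strategy --- reduce to a $\gamma$-biased congestion game via $\gamma(s)=s\ktwo/(1+s\kone)$, use the bowl-shaped biased price of anarchy (Lemma~\ref{lem:meir}), get $\perv^{\rm hm}\leq\poa^{\rm hm}$ for free, and close the gap with a network whose un-influenced Nash flow is exactly optimal --- but the step you flag as the ``main obstacle'' is precisely where your argument is incomplete, and the repair is not the gadget-composition you sketch. First, you never determine \emph{which} $\gamma^\dagger$ is binding at the optimum. The paper shows (via the crossing of the increasing curve $\auxpoa(\gee_d,\bmin,\cdot)$ and the decreasing curve $\auxpoa(\gee_d,\bmax,\cdot)$ as functions of $\kone$ with $\ktwo^*=\Kmax$) that the minimizer satisfies $\kone^*<\Kmax-1/\bmax$ strictly, so the worst case is attained at $s=\bmax$ with $\gamma(\bmax)>1$. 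This is essential, because your proposed construction \emph{cannot succeed} in the regime $\gamma^\dagger\leq1$: Theorem~\ref{thm:nonperv} says such tolls are non-perverse on parallel networks while Theorem~\ref{thm:poaSL} says their PoA exceeds $1$, so for $\gamma\leq1$ there is no network on which the biased flow attains the extremal ratio while the un-influenced Nash flow is optimal. Splicing constant-latency edges into a near-extremal network for $\gamma^\dagger<1$ so that the un-influenced equilibrium ``collapses onto the optimal flow'' would necessarily destroy the extremal ratio.

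Second, once you know the binding case is $\gamma(\bmax)>1$, no iteration or calibration is needed: the known extremal instance for the over-sensitive regime (Lemma~\ref{lem:pervSU}, using the two-link network of Figure~\ref{fig:networks}(c) with $\ell_1(f_1)=\alpha f_1^d$, $\ell_2\equiv1$, $r=1$, and $\alpha=(\beta(1+d))^d/(1+d\beta)^{d+1}$) \emph{already} has $\Lnf(G,\emptyset)=\Lopt(G)$ --- both route all traffic on the monomial edge --- while the $\beta>1$-biased flow over-avoids that edge and attains the $\beta>1$ bound of~\eqref{eq:poahelper}. On that single network the perversity ratio equals the price-of-anarchy ratio, which combined with $\perv^{\rm hm}\leq\poa^{\rm hm}$ and the endpoint identification gives~\eqref{eq:tradeoff} immediately. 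So the skeleton of your proof is right, but the conclusion hinges on two facts you do not establish: that the optimal coefficients strictly overshoot $\gamma=1$ at $s=\bmax$, and that the $\gamma>1$ worst-case network is automatically one where removing the tolls recovers the optimum.
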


That is, the tolls which minimize the price of anarchy render the perversity index exactly equal to the price of anarchy -- which necessarily must be greater than $1$ (see~\cite{Brown2017d}).
The proof of Theorem~\ref{thm:opt} appears in the Appendix.

\vs\vs
\section{Conclusion}

This paper represents an initial step towards understanding the possible negative consequences of influencing social behavior in network routing situations.
		Note that this paper largely considers only network-agnostic taxation mechanisms, but that in practice many social planners would have access to more information.
		Future work will focus on this closely and attempt to understand how obtaining information regarding the structure of the network can help reduce the risk of unintended consequences.

\bibliographystyle{ieeetr}
\bibliography{../library/library}

\begin{thebibliography}{10}

\bibitem{Brown2016d}
P.~N. Brown and J.~R. Marden, ``{Avoiding Perverse Incentives in Affine
  Congestion Games},'' in {\em 55th IEEE Conf. on Decision and Control}, (Las
  Vegas, NV, USA), pp.~7010--7015, 2016.

\bibitem{Brown2017b}
P.~N. Brown and J.~R. Marden, ``{Fundamental Limits of Locally-Computed
  Incentives in Network Routing},'' in {\em American Control Conf.}, (Seattle,
  WA, USA), pp.~5263--5268, 2017.

\bibitem{Brown2017e}
P.~N. Brown and J.~R. Marden, ``{The Benefit of Perversity in Distributed
  Network Routing},'' in {\em 56th IEEE Conf. on Decision and Control},
  (Melbourne, Australia), pp.~6229--6234, 2017.

\bibitem{Brown2017d}
P.~N. Brown and J.~R. Marden, ``{Optimal Mechanisms for Robust Coordination in
  Congestion Games},'' {\em IEEE Transactions on Automatic Control}, vol.~63,
  no.~8, pp.~2437--2448, 2018.

\bibitem{Alizadeh2014}
M.~Alizadeh, Y.~Xiao, A.~Scaglione, and M.~van~der Schaar, ``{Dynamic Incentive
  Design for Participation in Direct Load Scheduling Programs},'' {\em IEEE
  Journal of Selected Topics in Signal Processing}, vol.~8, no.~6,
  pp.~1111--1126, 2014.

\bibitem{Gopalakrishnan2014}
R.~Gopalakrishnan, J.~R. Marden, and A.~Wierman, ``{Potential games are
  necessary to ensure pure nash equilibria in cost sharing games},'' {\em
  Mathematics of Operations Research}, vol.~39, no.~4, pp.~1252--1296, 2014.

\bibitem{Kleiner2011}
A.~Kleiner, B.~Nebel, and V.~A. Ziparo, ``{A mechanism for dynamic ride sharing
  based on parallel auctions},'' in {\em International Joint Conference on
  Artificial Intelligence}, pp.~266--272, 2011.

\bibitem{Brown2017a}
P.~N. Brown and J.~R. Marden, ``{Studies on Robust Social Influence Mechanisms:
  Incentives for Efficient Network Routing in Uncertain Settings},'' {\em IEEE
  Control Systems Magazine}, vol.~37, no.~1, pp.~98--115, 2017.

\bibitem{Silva2001}
C.~Silva, ``{Application of Mechanism Design to Electric Power Markets
  (Republished)},'' {\em IEEE Trans. on Power Systems}, vol.~16, no.~4,
  pp.~862--869, 2001.

\bibitem{Papadimitriou2001}
C.~Papadimitriou, ``{Algorithms, games, and the internet},'' in {\em STOC '01:
  Proceedings of the thirty-third annual ACM symposium on Theory of computing},
  pp.~749--753, 2001.

\bibitem{Roughgarden2002}
T.~Roughgarden, ``{How Bad Is Selfish Routing?},'' {\em J. ACM}, vol.~49,
  no.~2, pp.~236--259, 2002.

\bibitem{Fleischer2004}
L.~Fleischer, K.~Jain, and M.~Mahdian, ``{Tolls for Heterogeneous Selfish Users
  in Multicommodity Networks and Generalized Congestion Games},'' in {\em Proc.
  45th IEEE Symp. on Foundations of Computer Science}, (Rome, Italy),
  pp.~277--285, 2004.

\bibitem{Karakostas2004}
G.~Karakostas and S.~Kolliopoulos, ``{Edge pricing of multicommodity networks
  for heterogeneous selfish users},'' in {\em Proc. 45th IEEE Symp. on
  Foundations of Computer Science}, (Rome, Italy), pp.~268--276, 2004.

\bibitem{Beckmann1956}
M.~J. Beckmann, C.~B. McGuire, and C.~B. Winsten, ``{Studies in the Economics
  of Transportation},'' 1955.

\bibitem{Sandholm2002}
W.~Sandholm, ``{Evolutionary Implementation and Congestion Pricing},'' {\em The
  Review of Economic Studies}, vol.~69, no.~3, pp.~667--689, 2002.

\bibitem{Mas-Colell1984}
A.~Mas-Colell, ``{On a Theorem of Schmeidler},'' {\em Mathematical Economics},
  vol.~13, pp.~201--206, 1984.

\bibitem{Bonifaci2011}
V.~Bonifaci, M.~Salek, and G.~Sch{\"{a}}fer, ``{Efficiency of restricted tolls
  in non-atomic network routing games},'' in {\em Algorithmic Game Theory},
  vol.~6982 LNCS, pp.~302--313, 2011.

\bibitem{Jelinek2014}
T.~Jelinek, M.~Klaas, and G.~Sch{\"{a}}fer, ``{Computing Optimal Tolls with Arc
  Restrictions and Heterogeneous Players},'' {\em LIPIcs - Leibniz
  International Proceedings in Informatics}, vol.~25, p.~444, 2014.

\bibitem{Arieli2015}
I.~Arieli, ``{Transfer Implementation in Congestion Games},'' {\em Dynamic
  Games and Applications}, vol.~5, no.~2, pp.~228--238, 2015.

\bibitem{Bhaskar2014}
U.~Bhaskar, K.~Ligett, L.~Schulman, and C.~Swamy, ``{Achieving Target
  Equilibria in Network Routing Games without Knowing the Latency Functions},''
  in {\em Proc. IEEE Symp. Foundations of Computer Science}, (Philadelphia, PA,
  USA), pp.~31--40, 2014.

\bibitem{Brown2017c}
P.~N. Brown and J.~R. Marden, ``{The Robustness of Marginal-Cost Taxes in
  Affine Congestion Games},'' {\em Transactions on Automatic Control}, vol.~62,
  no.~8, pp.~3999--4004, 2017.

\bibitem{Poveda2017}
J.~I. Poveda, P.~N. Brown, J.~R. Marden, and A.~R. Teel, ``{A Class of
  Distributed Adaptive Pricing Mechanisms for Societal Systems with Limited
  Information},'' in {\em 56th IEEE Conf. on Decision and Control}, (Melbourne,
  Australia), 2017.

\bibitem{Meir2015b}
R.~Meir and D.~Parkes, ``{Playing the Wrong Game: Smoothness Bounds for
  Congestion Games with Behavioral Biases},'' {\em Performance Evaluation
  Review}, vol.~43, no.~3, pp.~67--70, 2015.

\bibitem{Chen2014}
P.-a. Chen, B.~{De Keijzer}, D.~Kempe, and G.~Shaefer, ``{Altruism and Its
  Impact on the Price of Anarchy},'' {\em ACM Trans. Economics and
  Computation}, vol.~2, no.~4, 2014.

\bibitem{Lianeas2016}
T.~Lianeas, E.~Nikolova, and N.~E. Stier-Moses, ``{Asymptotically tight bounds
  for inefficiency in risk-averse selfish routing},'' in {\em IJCAI
  International Joint Conference on Artificial Intelligence}, vol.~2016-Janua,
  pp.~338--344, 2016.

\bibitem{Kleer2017}
P.~Kleer and G.~Sch{\"{a}}fer, ``{Path deviations outperform approximate
  stability in heterogeneous congestion games},'' in {\em International
  Symposium on Algorithmic Game Theory}, pp.~212--224, 2017.

\bibitem{Braess2005}
D.~Braess, A.~Nagurney, and T.~Wakolbinger, ``{On a Paradox of Traffic
  Planning},'' {\em Transportation Science}, vol.~39, pp.~446--450, nov 2005.

\bibitem{Cole2003}
R.~Cole, Y.~Dodis, and T.~Roughgarden, ``{Pricing network edges for
  heterogeneous selfish users},'' in {\em Proc. of the 35th ACM symp. on Theory
  of computing}, (New York, New York, USA), pp.~521--530, 2003.

\end{thebibliography}

\section*{APPENDIX: Proofs}

\noindent\emph{Proof of Lemma~\ref{lem:necessary}:}
We shall exhibit example networks on which various tolls are perverse, thus eliminating all but tolls of the form in~\eqref{eq:nonperv}.
Note that the entire proof of Lemma~\ref{lem:necessary} is completed for homogeneous user populations; this is because non-perversity for homogeneous populations is a necessary condition for non-perversity for heterogeneous populations.
First, consider the network in Figure~\ref{fig:networks}(a).

This network has two paths in parallel; the first path is a pair of edges in series with arbitrary latency functions $\ell_1$ and $\ell_2$, the second path consists of a single edge with latency function $\ell_3$ satisfying $\ell_1+\ell_2=\ell_3$.
For any such network, any nominal Nash flow $\fnash$ is optimal; thus, a non-perverse taxation mechanism $\tmech$ would need to incentivize a flow $f^\tmech$ that satisfies $f^\tmech = \fnash = \fopt = (r/2,r/2)$ by charging tolls satisfying $\tau_1(r/2)+\tau_2(r/2) = \tau_3(r/2)$.
That is, $\tmech$ is additive, or $\tmech(\ell_1)+\tmech(\ell_2) = \tmech(\ell_1+\ell_2)$.
Note that this also implies that $\tmech(0)=0$, since any  function $\ell_1$ can be written $\ell_1+0$.

Next we show that $\tmech(\ell)$ is constant when $\ell$ is constant.
Consider the network in Figure~\ref{fig:networks}(a) when $\ell_1(f_1)=\ell_3(f_3)=b>0$ and $\ell_2(f_2)=\epsilon>0$.
It is clear that the unique Nash and optimal flows both route all traffic on the lower path; i.e., for all $r>0$, $\fnash_3=\fopt_3=r$ and $\fnash_1=\fopt_1=0$.
Writing $\tmech(b)(\cdot)$ as the tolling function assigned to $\ell_1(f_1)=b$ by $\tmech$, it follows that $b+T(b)(r) \leq b+\epsilon + T(b)(0)+T(\epsilon)(0)$, or
\begin{equation}
T(b)(r)-T(b)(0) \leq \epsilon + T(\epsilon)(0). \label{eq:const1}
\end{equation}
Requiring that the map $x\mapsto T(x)(0)$ is Lebesgue-measurable, we obtain that it must be continuous.
Recall that $T(0)(f)=0$, so taking~\eqref{eq:const1} in the limit as $\epsilon\to0$ we obtain that for all $r$ and $b$, $T(b)(r)-T(b)(0) \leq0$, or $T(b)(\cdot)$ is nonincreasing in flow.
An opposite argument shows that $\tmech(b)(\cdot)$ must be a nondecreasing (and thus constant) function of flow.
%We will henceforth slightly abuse notation and write $\tmech(b)$ to denote the value of the constant function assigned by $\tmech$ to latency function $\ell(f)=b$.
Because $\tmech$ is a continuous mapping from $\mathbb{R}$ to $\mathbb{R}$ for constant latency functions, its additivity implies linearity: %\footnote{Of course, this is provided that we require $\tmech$ to be Lebesgue-measurable.} 
$\tmech(b)(f) = \kone b$.
Finally, for all $\epsilon>0$, it must always be true for any possible agent sensitivities $s\in[\bmin,\bmax]$ that $(1+\kone s)b < (1+\kone s) (b+\epsilon)$, or that $\kone>-1/\bmax$.

%Furthermore, note that for any constant latency functions satisfying $b_1+b_2<b_3$, it must be true that $\tmech(b_1)+\tmech(b_2)<\tmech(b_3)$; that is, $\tmech$ must be convex.
%On the other hand, when $b_1+b_2>b_3$, it must be true that $\tmech(b_1)+\tmech(b_2)>\tmech(b_3)$.
%That is, $\tmech$ must be concave.
%Thus, $\tmech$ is linear for constant functions, assigning a fixed toll of $\kone b$ to any constant latency function of value $b$.

Next, we show that degree-$d$ monomial latency functions must be assigned degree-$d$ tolling functions.
The network in Figure~\ref{fig:networks}(b) has two edges in parallel with latency functions $\ell_1(f_1) = \alpha (f_1)^d$ and $\ell_2(f_2) = \lambda\alpha (f_2)^d$, where $\alpha>0$, $\lambda>0$, and $d\geq1$.
For any such network, the unique un-influenced Nash flow is optimal; thus, a non-perverse $\tmech$ would need to induce a flow $f^\tmech$ that satisfies $f^\tmech = \fnash = \fopt$.
It can be shown that for any $r>0$, this flow is
$$f^T_1 = \frac{(\lambda\alpha)^{1/d}r}{(\alpha)^{1/d}+(\lambda\alpha)^{1/d}}, \ \ \ f^T_2= \frac{(\alpha)^{1/d}r}{(\alpha)^{1/d}+(\lambda\alpha)^{1/d}}.$$
Since $f^T$ is a nominal Nash flow, $\ell_1(f^T_1) = \ell_2(f^T_2)$; and $f^T=\fopt$ implies that $\tau_1(f^T_1) = \tau_2(f^T_2)$.
In the following, let $r = (\alpha)^{1/d}+(\lambda\alpha)^{1/d}$, so for all $\alpha,\lambda$,
$\tau_1\left((\lambda\alpha)^{1/d}\right) = \tau_2\left((\alpha)^{1/d}\right).$

First let $\lambda = 2$, so that $\ell_2(f_2) = 2\ell_1(f_2)$.
Then additivity ensures that $\tau_2(f_2) = 2\tau_1(f_2)$.
That is,
%\begin{equation}
$	\tau_1\left((2\alpha)^{1/d}\right) = 2\tau_1\left((\alpha)^{1/d}\right).$
%\end{equation}
Since this must hold for any $\alpha$, it implies either that $T(\alpha f^d)(f)\equiv0$, or that $T(\alpha f^d)(f)= \eta_\alpha f^d$ for some $\eta_\alpha>0$ which depends on $\alpha$.

To characterize $\eta_\alpha$, we substitute $f^T$ into $\tau_1(f^T_1) = \tau_2(f^T_2)$ and solve, yielding
%\begin{align}
%	\eta_\alpha \left((\lambda\alpha)^{1/d}\right)^d 	&= \eta_{\lambda\alpha} \left(\alpha^{1/d}\right)^d \nonumber \\
	$\eta_\alpha \lambda 						= \eta_{\lambda\alpha}.$ %\nonumber
%\end{align}
That is, $\eta_{\alpha}=K\alpha$ for some $K\geq0$, so $T(\alpha f^d)(f) = K\alpha (f)^d$ for some $K\geq0$.
Note that $K$ may depend on $d$.

%\vs\vs\vs
\begin{figure}
\vspace{1mm}
\centerline{\includegraphics[scale=0.22]{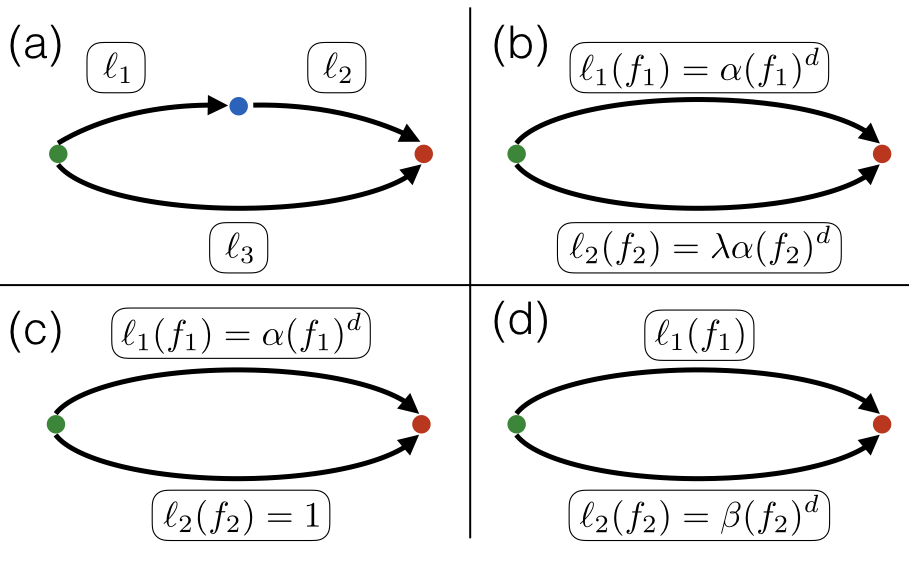}}
\vspace{-1em}
\caption{Example networks used to prove Lemma~\ref{lem:sufficient}.}
\label{fig:networks}
\vspace{-5mm}
\end{figure}

To find $K$, consider Figure~\ref{fig:networks}(c).
This network has $\ell_1(f_1) = \alpha (f_1)^d$ in parallel with a constant latency function $\ell_2(f_2) = 1.$
Here, if $r\leq (1/(\alpha(d+1)))^{1/d}$, the uninfluenced Nash and optimal flow on this network is $(r,0)$.
Thus, $\tau_1(f)$ must be small enough that it does not incentivize \emph{any} user to use edge 2 when $r$ is low.
Precisely, keeping in mind that $\tau_1(f_1)= K\alpha (f_1)^d$, we require that for all sensitivities $s\in[\bmin,\bmax]$,
%	\begin{equation}
$\alpha (f_1)^d + sK\alpha(f_1)^d \leq 1 + s\kone,$
%\end{equation}
or, substituting the appropriate $f_1$, that
\vs\vs\vs
\begin{equation}
	\alpha(1+sK)\left(\left(\frac{1}{\alpha(d+1)}\right)^{1/d}\right)^{d} \leq 1 + s\kone.
\end{equation}
This implies that $sK\leq s\kone d + s\kone + d$.
This simplifies nicely if we write $K=\kone+\ktwo d$ (where $\ktwo\in\mathbb{R}$), in which case it follows that $\ktwo\leq\kone+1/s$ for all $d$ and $s$, or that $\ktwo\leq\kone + 1/\bmax$. 
Writing $\tau_1(f_1)$ in terms of $\kone$ and $\ktwo$ gives the nice decomposition in terms of latency function $\ell_1$ and marginal-cost function $f_1\cdot\ell_1'$:
\vs
\begin{equation}
	\tau_1(f_1) = \kone\ell_1(f_1) + \ktwo f_1\cdot\ell'_1(f_1). \nonumber
\vs
\end{equation}

Finally, consider the network in Figure~\ref{fig:networks}(d).
This network has some arbitrary admissible latency function $\ell_1$ on edge 1 and a latency function $\ell_2(f_2)=\beta (f_2)^d$ on edge 2.
We will choose $\ell_2$ such that the optimal and Nash flows coincide on this network for some $r>1$ when $f_2=1$.

First, assume that $\ell_1(0)=0$.
Let $\beta = \ell_1(f_1)$ and $d = f_1\ell_1'(f_1)/\ell_1(f_1)\geq1$.
Then $\ell_1(f_1) = \ell_2(1)$ and $f_1\ell_1'(f_1) = \ell_2'(1)$; i.e., both the latencies and the marginal costs of the edges are equal, which means that $(f_1,1)$ is both a Nash and an optimal flow.
Since $\ell_2$ is a monomial, we can write its tolling function as $\tau_2(f_2) = \kone \beta(f_2)^d + \ktwo d\beta(f_2)^d$, where $\ktwo\leq\kone+1/\bmax$.
Using this, we can simply derive the first-link tolling function $\tmech({\ell}_1)(f_1)$ using the following:
\vs
\begin{align}
\ell_1(f_1) + s\tmech({\ell}_1)(f_1)	&= \beta (f_2)^d + s\left(\kone\beta + \ktwo d\beta\right) (f_2)^d. \nonumber 	
\vs		
\end{align}
Substituting the definitions of $\beta$ and $d$ and canceling similar terms, we obtain that $T(\ell_1)$ satisfies~\eqref{eq:nonperv} as desired.

Finally, if $\ell_1(0)>0$, define $\tilde{\ell}_1(f_1)\triangleq \ell_1(f_1)-\ell_1(0)$, and let $\ell_2(f_2) = \beta(f_2)^d + \ell_1(0)$ with $\beta=\tilde{\ell}_1(f_1)$ and $d = f_1\ell_1'(f_1)/\tilde{\ell}_1(f_1)\geq1$; then the additivity of $\tmech$ ensures that $T(\ell_1) = T(\tilde{\ell}_1) + \kone(\ell_1(0))$ and $T(\ell_2) = T(\beta(f_2)^d) + \kone(\ell_1(0))$, and the proof proceeds as above.
%\begin{align}
%	\tmech(\tilde{\ell}_1)(f_1)	&= \kone \tilde{\ell}_1(f_1) + \ktwo f_1\tilde{\ell}'_1(f_1).
%\end{align}
%\end{proof}
\hfill\qed

\vspace{-2mm}
\subsection*{Proofs for Theorem~\ref{thm:nonperv}}
\setcounter{theorem}{2}
Next, Lemma~\ref{lem:sufficient} shows that Nash flows on parallel networks behave very nicely under the influence of~$\tgmc$.
Specifically, Lemma~\ref{lem:sufficient} proves that the worst-case total latency on a parallel network with~$\tgmc$ is realized by a low-sensitivity homogeneous population.

\begin{lemma} \label{lem:sufficient}
Let $s^{\rm L}$ denote a homogeneous population in which every user has sensitivity $\bmin\geq0$, and denote by $\tgmc$ a taxation mechanism satisfying the conditions of Lemma~\ref{lem:necessary}.
For any parallel network $G\in\gee^{\rm p}$ and heterogeneous population $s$ in which every user has a sensitivity no less than $\bmin$,
\vs
\begin{equation}
\Lnash\left(G,s^{\rm L},\tgmc\right)\geq \Lnash\left(G,s,\tgmc\right).
\vs\vs\vs\vs
\end{equation}
\end{lemma}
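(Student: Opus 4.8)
\noindent The plan is to strip the sensitivities out of the problem by recasting each user as an equivalent ``partially altruistic'' player, and then to combine the convex-potential structure available for the homogeneous benchmark with a single-crossing (sorting) argument for the heterogeneous population.

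\emph{Step 1 (reduction to altruism biases).} A user of sensitivity $s$ facing $\tgmc=T(\kone,\ktwo)$ perceives on edge $e$ the cost $\ell_e(f_e)+s\tau_e(f_e)=(1+s\kone)\bigl[\ell_e(f_e)+\gamma(s)\,m_e(f_e)\bigr]$, where $m_e(y)\triangleq y\ell_e'(y)\ge0$ is the marginal-congestion term and $\gamma(s)\triangleq s\ktwo/(1+s\kone)$. From the coefficient restrictions of Lemma~\ref{lem:necessary} one checks that $1+s\kone>0$, that $\gamma'(s)=\ktwo/(1+s\kone)^2\ge0$, and that $\gamma(s)\in[0,1]$ for all $s\in[\bmin,\bmax]$. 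Since $(1+s\kone)$ is a common multiplicative factor across all edges, user $x$ simply picks an edge minimizing $\ell_e(f_e)+\gamma(s_x)\,m_e(f_e)$, so the population enters only through the distribution of biases $\gamma_x\triangleq\gamma(s_x)$, which is nondecreasing in $x$ and lies in $[\gamma_0,1]$ with $\gamma_0\triangleq\gamma(\bmin)$. Hence it suffices to show that on a parallel network, among all populations with biases in $[\gamma_0,1]$, the homogeneous population with common bias $\gamma_0$ --- which is precisely $s^{\rm L}$ --- induces the Nash flow of largest total latency.

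\emph{Step 2 (homogeneous benchmark).} The bias-$\gamma_0$ homogeneous population is an ordinary potential game on the parallel network, so a Nash flow $\bar{f}$ of it minimizes over feasible flows the convex potential $\Phi(f)=\gamma_0\,\mathcal{L}(f)+(1-\gamma_0)\,W(f)$, where $W(f)=\sum_e\int_0^{f_e}\ell_e(t)\,dt$ is the Beckmann potential (for $\gamma_0=0$ this is just $W$, so $\bar{f}$ is the uninfluenced Nash flow, and $\mathcal{L}(\bar{f})=\Lnash\left(G,s^{\rm L},\tgmc\right)$). In particular $\bar{f}$ obeys the variational inequality $\sum_e\bigl(\ell_e(\bar{f}_e)+\gamma_0 m_e(\bar{f}_e)\bigr)\bigl(f_e-\bar{f}_e\bigr)\ge 0$ for every feasible $f$.

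\emph{Step 3 (sorting and comparison).} Let $f$ be any Nash flow for a population with biases in $[\gamma_0,1]$. Since a bias-$\gamma$ user's perceived cost on edge $e$ is affine in $\gamma$ with slope $m_e(f_e)$, the equilibrium assignment $a$ of users to edges is monotone: more altruistic users occupy the used edges with smaller $m_e(f_e)$, equivalently larger $\ell_e(f_e)$. Applying the Nash condition pointwise against any assignment $a^0$ that realizes the edge flows $\bar{f}$ and integrating over users yields
\[
\mathcal{L}(f)+\int_0^r\gamma_x\,m_{a(x)}(f_{a(x)})\,dx\ \le\ \sum_e \bar{f}_e\,\ell_e(f_e)+\int_0^r\gamma_x\,m_{a^0(x)}(f_{a^0(x)})\,dx .
\]
The remaining work is to choose $a^0$ as the rearrangement that places the most altruistic users on the smallest-$m_e(f_e)$ edges of $\bar{f}$ (minimizing its integral), and then to combine this inequality with the variational inequality of Step~2 and the convexity of $\mathcal{L}$ and $W$ (whose gradients at $f$ have components $\ell_e(f_e)+m_e(f_e)$ and $\ell_e(f_e)$, respectively) to obtain $\mathcal{L}(f)\le\mathcal{L}(\bar{f})$. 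Undoing Step~1 then gives $\Lnash\left(G,s,\tgmc\right)\le\Lnash\left(G,s^{\rm L},\tgmc\right)$, as claimed.

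\emph{Main obstacle.} The crux is this final comparison. Bounding the biases crudely by $\gamma_0\le\gamma_x\le1$ inside the two integrals above is too lossy: it leaves a residual proportional to $(1-\gamma_0)\sum_e f_e^2\ell_e'(f_e)$, precisely the ``altruism benefit'' one is trying to account for. The argument must instead exploit the monotone sorting of Step~3 --- altruistic users really are concentrated on the flat, high-latency edges both under $f$ and, by the choice of $a^0$, under $\bar{f}$ --- so that the two marginal-cost integrals can be compared using monotonicity of $\ell_e$ and $m_e$ rather than a pointwise bound on $\gamma_x$. A minor secondary point is that neither Nash flow need be unique; this is harmless, since the bound $\mathcal{L}(f)\le\mathcal{L}(\bar{f})$ is proved for an arbitrary heterogeneous Nash flow $f$ and a fixed minimizer $\bar{f}$ of $\Phi$, and the edge case $\bmin=0$ (hence $\gamma_0=0$, $\Phi=W$) is covered by the same argument.
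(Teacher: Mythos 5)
Your Step 1 is exactly the paper's opening move: both arguments reduce the heterogeneous tolled game to one with ``synthetic'' biases $\gamma(s)=s\ktwo/(1+s\kone)\in[0,1]$, monotone in $s$, so that only the bias distribution matters. After that, however, your argument does not close. The entire content of the lemma is the comparison $\mathcal{L}(f)\le\mathcal{L}(\bar f)$ between an arbitrary heterogeneous Nash flow $f$ and the homogeneous bias-$\gamma_0$ Nash flow $\bar f$, and your Step 3 stops exactly there: the displayed smoothness-type inequality leaves you with the cross term $\sum_e\bar f_e\,\ell_e(f_e)$ and two marginal-cost integrals weighted by the different assignments $a$ and $a^0$, and your own ``Main obstacle'' paragraph concedes that you cannot cancel these by bounding $\gamma_x$ pointwise and do not exhibit the rearrangement argument that ``must'' do the job. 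Moreover, the variational inequality of Step 2 characterizes $\bar f$ against feasible flows under the \emph{homogeneous} perceived costs, while the Nash condition for $f$ is stated under \emph{heterogeneous} costs; no mechanism is given for combining the two, and it is precisely this combination that is hard. As written, this is a proof plan whose central step is missing.

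The paper's proof is genuinely different and avoids the global comparison entirely. It defines a homotopy $\Sigma(\alpha,s^0)$ that continuously converts a mass $\alpha$ of the most-sensitive users into least-sensitive ones, and shows $\frac{\partial}{\partial\alpha}\mathcal{L}(\fnash(\alpha))\ge0$ along this path. The work goes into structural claims about Nash flows on parallel networks under $\tgmc$: the ordering of latencies and marginal costs across used paths (your sorting observation is this claim), and the facts that only the path-groups strategically coupled to the two extreme paths change flow, with flow moving from the low-marginal-cost group to the high-marginal-cost group. Differentiating total latency along the homotopy and invoking the marginal-cost ordering then gives monotonicity directly. If you wish to salvage your route, you would have to actually carry out the rearrangement comparison of the two marginal-cost integrals (this is essentially the content of the altruism price-of-anarchy analysis in the Chen et al. reference the paper cites as an alternative); otherwise the local homotopy argument is the cleaner path.
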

\vspace{2mm}

\begin{proof}
For every user $x$, $\tgmc$ induces cost functions of the form
\vs\vs\vs\vs\vs
\begin{equation}
J^x_e(f_e) = (1+s_x\kone)\ell_e(f_e) + s_x\ktwo f_e \ell'_e(f_e).
\vs
\end{equation}
Since we can scale these costs functions by any user-specific positive scalar without changing the underlying Nash flows, these cost functions are equivalent to the following:
\vs
\begin{equation}
J^x_e(f_e) = \ell_e(f_e) + \frac{s_x\ktwo}{1+s_x\kone} f_e \ell'_e(f_e). \label{eq:thisone}
\vs
\end{equation}
Given the conditions $\ktwo\geq0$ and $\kone\geq \ktwo -1/\bmax$, the expression $\frac{s_x\ktwo}{1+s_x\kone}\in[0,1]$ and is monotone increasing in $s_x$.
Thus, analysis can be simplified with abuse of notation by assuming that cost functions are simply given by the following: \vs \vs
\begin{equation}
J^x_e(f_e) = \ell_e(f_e) + s_x  f_e \ell'_e(f_e), \label{eq:basiccost}
\vs
\end{equation}
where $s_x\in[0,1]$ can be viewed as a synthetic sensitivity.

For convenience, we write $\ell^*_e(f_e)\triangleq f_e\ell_e'(f_e)$.
When describing the cost experienced by a particular agent whose sensitivity is $s\in[0,1]$, we write 
\begin{equation}
\ell_e^s(f_e) \triangleq \ell_e(f_e) + s \ell^*_e(f_e), \label{eq:ells}
\end{equation}
and we write $\ell_e^{\rm mc}(f_e) \triangleq \ell_e^1(f_e)$ to denote the marginal-cost function associated with edge $e$.
The following claim gives important information about the structure of Nash flows induced by~$\tgmc$.

\begin{claim} \label{claim:ordering}
If $\fnash$ is a Nash flow on $G\in\gee^{\rm p}$ for population $s$ under the influence of $\tgmc$, let $p_i$ and $p_j$ be any two paths for which $\fnash_i>0$ and $\fnash_j>0$, and with some user $x$ selecting $p_i$ and user $y$ selecting $p_j$.
Then if $s_x\leq s_y$, we have the following:
\begin{enumerate}
\item \label{item:latOrder} $\ell_i(\fnash_i) \leq \ell_j(\fnash_j)$,
\item \label{item:mcOrder} $\ell^{\rm mc}_i(\fnash_i) \geq \ell^{\rm mc}_j(\fnash_j)$.
\end{enumerate}
%When $\ell_i(0)<\ell_j(0)$, inequality~\eqref{item:mcOrder} is strict.
\end{claim}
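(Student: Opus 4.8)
The plan is to prove both inequalities by a simultaneous swap argument, exploiting the defining inequalities of a Nash flow for the two users $x$ and $y$. Recall that under $\tgmc$ the effective cost that a user of synthetic sensitivity $s$ incurs on path $p_k$ is $\ell_k^{s}(\fnash_k) = \ell_k(\fnash_k) + s\,\ell_k^{\rm mc}(\fnash_k)$ (using the single-edge-per-path structure of parallel networks, writing $\ell_k$ for the latency of the unique edge on $p_k$). Since $x$ selects $p_i$ and $y$ selects $p_j$ in the Nash flow $\fnash$, and both paths carry positive flow, the Nash conditions give
\begin{align}
\ell_i(\fnash_i) + s_x\,\ell_i^{\rm mc}(\fnash_i) &\le \ell_j(\fnash_j) + s_x\,\ell_j^{\rm mc}(\fnash_j), \label{eq:x-prefers-i}\\
\ell_j(\fnash_j) + s_y\,\ell_j^{\rm mc}(\fnash_j) &\le \ell_i(\fnash_i) + s_y\,\ell_i^{\rm mc}(\fnash_i). \label{eq:y-prefers-j}
\end{align}

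First I would add \eqref{eq:x-prefers-i} and \eqref{eq:y-prefers-j}: the latency terms cancel and one obtains $(s_x - s_y)\,\ell_i^{\rm mc}(\fnash_i) \le (s_x - s_y)\,\ell_j^{\rm mc}(\fnash_j)$. Since $s_x \le s_y$, dividing by the nonpositive quantity $s_x - s_y$ (handling the case $s_x = s_y$ separately, where \eqref{eq:x-prefers-i} and \eqref{eq:y-prefers-j} force equality throughout) flips the inequality and yields $\ell_i^{\rm mc}(\fnash_i) \ge \ell_j^{\rm mc}(\fnash_j)$, which is statement~\ref{item:mcOrder}. Next, substituting this marginal-cost ordering back into \eqref{eq:x-prefers-i}: since $s_x \ge 0$ and $\ell_i^{\rm mc}(\fnash_i) \ge \ell_j^{\rm mc}(\fnash_j)$, we have $s_x\,\ell_i^{\rm mc}(\fnash_i) \ge s_x\,\ell_j^{\rm mc}(\fnash_j)$, so \eqref{eq:x-prefers-i} gives $\ell_i(\fnash_i) \le \ell_j(\fnash_j) + s_x\big(\ell_j^{\rm mc}(\fnash_j) - \ell_i^{\rm mc}(\fnash_i)\big) \le \ell_j(\fnash_j)$, which is statement~\ref{item:latOrder}. (Symmetrically, one could feed the ordering into \eqref{eq:y-prefers-j} to reach the same conclusion.)

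The argument is essentially algebraic once the Nash inequalities are written down, so I do not expect a serious obstacle; the one point requiring care is the degenerate case $s_x = s_y$, where the division step is invalid — there one notes that \eqref{eq:x-prefers-i} and \eqref{eq:y-prefers-j} are each other's reverse, hence both paths have equal effective cost for that common sensitivity, and a short separate argument (or a limiting/continuity remark) gives $\ell_i^{\rm mc}(\fnash_i) = \ell_j^{\rm mc}(\fnash_j)$ and $\ell_i(\fnash_i) = \ell_j(\fnash_j)$, so both claimed inequalities hold with equality. A second minor subtlety is the reduction from the genuine sensitivities $s_x \in [\bmin,\bmax]$ and coefficients $(\kone,\ktwo)$ to the synthetic sensitivities in $[0,1]$: this is exactly the monotone reparametrization $s \mapsto s\ktwo/(1+s\kone)$ established in the paragraph preceding the claim, and since that map is monotone increasing, $s_x \le s_y$ is preserved, so working with the synthetic model loses no generality.
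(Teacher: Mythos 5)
Your overall strategy (writing the two Nash inequalities for $x$ and $y$ and adding them) is the same swap argument the paper uses, and it does deliver item~1; but there is a genuine gap in your derivation of item~2, caused by a mis-statement of the effective cost. Under $\tgmc$ the cost a user of synthetic sensitivity $s$ incurs on $p_k$ is $\ell_k(\fnash_k)+s\,\ell_k^*(\fnash_k)$, where $\ell_k^*(f_k)=f_k\ell_k'(f_k)$ is the marginal \emph{toll}, not $\ell_k(\fnash_k)+s\,\ell_k^{\rm mc}(\fnash_k)$ with $\ell_k^{\rm mc}=\ell_k+f_k\ell_k'$ the full social marginal cost. With the correct cost, adding your two inequalities and dividing by the nonpositive quantity $s_x-s_y$ yields only $\ell_i^*(\fnash_i)\geq\ell_j^*(\fnash_j)$, i.e., the ordering of the \emph{tolls}; it does not by itself give $\ell_i^{\rm mc}(\fnash_i)\geq\ell_j^{\rm mc}(\fnash_j)$. (Item~1 survives: substituting $\ell_i^*\geq\ell_j^*$ back into $x$'s inequality still gives $\ell_i(\fnash_i)\leq\ell_j(\fnash_j)$, exactly as you argue.)

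Item~2 additionally requires the upper bound on the synthetic sensitivities, namely that $s\ktwo/(1+s\kone)\leq 1$ for all $s\leq\bmax$, which encodes the constraint $\ktwo\leq\kone+1/\bmax$ in the definition of $\tgmc$. From $y$'s Nash inequality one has $s_y\bigl(\ell_i^*(\fnash_i)-\ell_j^*(\fnash_j)\bigr)\geq\ell_j(\fnash_j)-\ell_i(\fnash_i)$; since the parenthesized difference is nonnegative and the synthetic $s_y\leq1$, it follows that $\ell_i^*(\fnash_i)-\ell_j^*(\fnash_j)\geq\ell_j(\fnash_j)-\ell_i(\fnash_i)$, i.e., $\ell_i+\ell_i^*\geq\ell_j+\ell_j^*$, which is item~2. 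This is precisely the extra step the paper takes via the minimally-indifferent sensitivities $s_i<1$. The omission is not cosmetic: if the induced $\beta$ exceeds $1$ the ordering of $\ell^{\rm mc}$ reverses, so any proof of item~2 that never invokes $\ktwo\leq\kone+1/\bmax$ cannot be correct. Your handling of the degenerate case $s_x=s_y$ and of the monotone reparametrization to synthetic sensitivities is fine.
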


\begin{proof}
Let $\fnash$ be a Nash flow and let $p_i$ and $p_j$ be paths such that $\fnash_i>0$ and $\fnash_j>0$.
Because this is a Nash flow, any agent $y$ using path $p_j$ experiences a (weakly) lower cost than she would on path $p_i$, or
\vs\vs\vs
\begin{equation}
\ell_i(f_i)+ s_y\ell_i^*(f_i) \geq \ell_j(f_{j}) + s_y\ell_{j}^*(f_{j}). \label{eq:1stNash} \vs
\end{equation}
%Any latency function can be uniquely decomposed into its 0-flow latency and its flow-varying part in the following way:
%\begin{equation}
%\ell(f) = \tilde{\ell}(f) + \ell(0).
%\end{equation}
%It is always true that $f\tilde{\ell}'_i(f) = f\ell'_i(f)$, so~\eqref{eq:1stNash} and $\ell_{i+1}(0) - \ell_i(0) \geq 0$ imply that
That is,\vs
\begin{equation}
s_y\left(\ell^*_i(f_i)-\ell^*_{j}(f_{j})\right) \geq {\ell}_{j}(f_{j}) - {\ell}_{i}(f_i). \label{eq:2ndNash}
\end{equation}

In the same Nash flow, consider some user $x$ using path $p_i$.
For this user, a similar argument shows that
\begin{equation}
s_x\left(\ell^*_i(f_i)-\ell^*_{j}(f_{j})\right) \leq {\ell}_{j}(f_{j}) - {\ell}_{i}(f_i). \label{eq:3rdNash}
\end{equation}
Combining~\eqref{eq:2ndNash} and~\eqref{eq:3rdNash} yields
\vs
\begin{equation}
0\leq(s_y-s_x)\left(\ell_i^*(f_i)-\ell_{j}^*(f_{j})\right),
\vs
\end{equation}
meaning that $s_y\geq s_x$ is equivalent to $\ell_i^*(f_i)\geq \ell_{j}^*(f_{j})$; inequalities~\eqref{eq:2ndNash} and~\eqref{eq:3rdNash} further imply $\ell_i(f_i)\leq\ell_j(f_j)$, proving item~\eqref{item:latOrder}.
That is, in every Nash equilibrium, lower-sensitivity users pay higher tolls and experience lower latency%
\footnote{
In the language of~\cite{Cole2003}, in our setting, every Nash flow is \emph{canonical.}
}%
.

Henceforth, we index the paths such that $\ell_i^*(\fnash_i)\geq\ell_{i+1}^*(\fnash_{i+1})$.
For each $i\in\{1,\dots,n-1\}$, if $\ell_i^*(\fnash_i)>\ell_{i+1}^*(\fnash_{i+1})$, we define $s_i$ as the number satisfying
\vs
\begin{equation}
\ell_i(f_i) + s_i \ell^*_i(f_i) = \ell_{i+1}(f_{i+1}) + s_i \ell^*_{i+1}(f_{i+1}).  \label{eq:4thNash}
\vs
\end{equation}
When $\ell_i^*(\fnash_i)=\ell_{i+1}^*(\fnash_{i+1})$, let $s_i=s_{i-1}$ (or $0$ if $i=1$).
Then a user with sensitivity $s_i$ (weakly) prefers path $i$ in $\fnash$; thus, each $s_i\leq s_{i+1}$. % and that $s_i\leq 1$.
%For each path, this $s_i$ represents the highest-possible sensitivity of any agent using path $p_i$, as well as the lowest-possible sensitivity of any agent using path $p_{i+1}$.
Finally, it follows from $\ell^*_i(f_i)\geq\ell^*_{i+1}(f_{i+1})$ and~\eqref{eq:4thNash} that for any $s_i<1$, we have
%\begin{equation}
$
\ell_i(f_i) + \ell^*_i(f_i) \geq \ell_{i+1}(f_{i+1}) + \ell^*_{i+1}(f_{i+1}),
$
%\end{equation}
proving item~\eqref{item:mcOrder}.
\end{proof}

The basic proof approach is to exploit this ordering of marginal costs, and show that reducing agents' sensitivities (thereby making the population ``more homogeneous") shifts agents from low marginal-cost paths to high marginal-cost paths, increasing the total latency.
Formally, we define a mapping $\Sigma:[0,1]\times\see\rightarrow \see$.
For any starting population $s^0$ and any $\alpha$, we will define $\Sigma\left(\alpha;s^0\right)$ as a right-shift of $s^0$ by $\alpha$ units.
The sensitivity of user $x$ in population $\Sigma(\alpha,s^0)$ is given by
\vs\vs
\begin{equation}
\Sigma(\alpha,s^0)_x= \left\{
\begin{array}{ll}
s_0(0)		& \mbox{if } x \leq \alpha \\
s_0(x-\alpha) 	& \mbox{if } x > \alpha.
\end{array}
\right.
\vs
\end{equation}
Because $s$ is defined to be an increasing function, this is equivalent to converting a mass of $\alpha$ of the most-sensitive users to a mass $\alpha$ of the least-sensitive users.

Claim~\ref{claim:ordering} allows us to assume without loss of generality that any user population $s$ has a finite number of sensitivity types; to see this, simply note that if users with distinct sensitivities are using the same path in a Nash flow, one sensitivity may be exchanged for the other without perturbing either agent's preferences.
%; to see this, consider a path $p_i$ in a Nash flow $\fnash$ being used by agents with sensitivities $s$ and $s'> s$.
%By definition, neither of these types of agents have an incentive to switch paths; therefore, we can change all $s'$-sensitivity agents on path $p_i$ to $s$-sensitivity agents without violating any incentive constraints.
%This synthetic user population is more homogeneous than the original, in that every user has a sensitivity that is weakly closer to $\bmin$ than it had in the original.
To be precise, given a Nash flow $\fnash$, (indexing the paths so that $\ell_i(\fnash_i)\leq\ell_{i+1}(\fnash_{i+1})$), we will assume for each path $p_i\in\pset\setminus p_1$, each user has the \emph{minimally-indifferent} sensitivity $s_i$ described in~\eqref{eq:4thNash} in the proof of Claim~\ref{claim:ordering}.
%\begin{equation}
%\ell_i(f_i) + s_i \ell^*_i(f_i) = \ell_{i-1}(f_{i-1}) + s_i \ell^*_{i-1}(f_{i-1}).
%\end{equation}

For notational brevity, we will typically write $\fnash(\alpha)$ to represent a Nash flow associated with population $\Sigma(\alpha;s_0)$.
Our central goal will be to characterize the effect of marginal increases in $\alpha$ on the Nash flow.
We express this marginal effect as $\frac{\partial}{\partial\alpha}\fnash(\alpha)$. 

%Since we are assuming that all users on each path have the same sensitivity, in the following, we write $u(p_i)\in[\bmin,\bmax]$ to denote the sensitivity of users on path $p_i$.
The following definition will be helpful in the proof:

\begin{definition} \label{def:sc}
In a Nash flow $\fnash$, %index the various paths so that $\ell_i(\fnash_i)\leq\ell_{i+1}(\fnash_{i+1})$. 
paths $p_i$ and $p_j$ with $i<j$ are said to be \emph{strategically coupled} if $s_i$ satisfies $\ell_i^{s_i}(\fnash_i) = \ell_{j}^{s_i}(\fnash_{j})$. That is, agents on the lower-index path are indifferent between the two paths. We write $\pset_i(\fnash)$ to denote the set of paths that are strategically coupled to path $p_i$ in $\fnash$.\footnote{When clear from context, we write $\pset_i(\fnash)$ simply as $\pset_i$.}
\end{definition}

First, we show that the primary effect of an increase in $\alpha$ is to shift traffic from $\pset_n$ to $\pset_1$.
\begin{claim} \label{claim:p1pn}
For every path $p_i\in\pset_1$, $\dda\fnash_i(\alpha)\geq 0$. For every path $p_j\in\pset_n$, $\dda\fnash_j(\alpha)\leq 0$.
\end{claim}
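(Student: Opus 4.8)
The plan is to analyze how the Nash flow responds to a marginal increase in the shift parameter $\alpha$ by using the structure established in Claim~\ref{claim:ordering}, namely that paths are ordered so that latencies $\ell_i(\fnash_i)$ are nondecreasing in $i$ while marginal costs $\ell_i^*(\fnash_i)$ are nonincreasing in $i$, and that the population is parameterized by finitely many minimally-indifferent sensitivities $s_1\leq s_2\leq\cdots\leq s_{n-1}$. First I would set up the system of equations characterizing the Nash flow $\fnash(\alpha)$: flow conservation $\sum_i \fnash_i(\alpha)=r$, together with the indifference equations $\ell_i^{s_i}(\fnash_i)=\ell_{i+1}^{s_i}(\fnash_{i+1})$ on each edge of the ``active'' support (where the inequality in the Nash condition binds) and the mass-matching equations relating the mass using each path to the population shift $\Sigma(\alpha;s_0)$. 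Differentiating this system in $\alpha$ and using convexity/monotonicity of $\ell_e$ and $\ell_e^*$ (guaranteed by the standing assumptions on latency functions) should let me sign the partial derivatives $\dda\fnash_i(\alpha)$.

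The key observation I would exploit is that increasing $\alpha$ strictly decreases the sensitivity of a mass-$\alpha$ chunk of users — converting the most-sensitive users to least-sensitive ones. By Claim~\ref{claim:ordering}, the most-sensitive users were exactly the ones on the lowest-latency / highest-marginal-cost paths, i.e.\ paths in $\pset_1$; after the shift they become least-sensitive and hence want to move toward $\pset_n$. Wait — I should be careful about the direction: least-sensitive users prefer low-marginal-cost / high-latency paths, which are in $\pset_n$. So the direct effect of the shift is to inject new demand onto $\pset_n$ and remove it from $\pset_1$. I would argue that the equilibrium response absorbs this by moving traffic \emph{away from} $\pset_n$ and \emph{toward} $\pset_1$ only in the net bookkeeping sense used in the claim's statement — so I need to read the claim's indexing convention carefully. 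Given that the claim asserts $\dda\fnash_i(\alpha)\ge0$ for $p_i\in\pset_1$ and $\dda\fnash_j(\alpha)\le0$ for $p_j\in\pset_n$, the cleanest route is: treat the strategically-coupled block $\pset_1$ as an aggregated ``super-path'' with an effective increasing cost function, likewise $\pset_n$, and show the flow perturbation across the boundary has a definite sign by a monotone-comparative-statics argument — the standard trick being that the equilibrium conditions form a monotone system, so a perturbation that lowers the effective cost of the $\pset_1$-block relative to the $\pset_n$-block shifts mass into $\pset_1$.

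Concretely, I would (1) fix a generic $\alpha$ where the support and the coupling structure $\pset_1,\dots,\pset_n$ are locally constant, so $\fnash(\alpha)$ is locally the unique solution of a smooth square system; (2) write that system in block form separating $\pset_1$, $\pset_n$, and the ``interior'' paths, and differentiate; (3) use that $\partial/\partial\alpha$ enters only through the boundary sensitivities/masses and has a fixed sign, and that the Jacobian is an M-matrix-like object (diagonally dominant with the right signs, coming from $\ell_e'\ge0$ and convexity) to conclude the sign of each component of $\dda\fnash(\alpha)$; and (4) handle the measure-zero set of $\alpha$ where the combinatorial structure changes by a continuity/one-sided-derivative argument, noting $\fnash(\alpha)$ is continuous in $\alpha$. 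The main obstacle I expect is step (3): carefully verifying that the linearized equilibrium system has the sign structure needed to propagate ``mass enters $\pset_1$, leaves $\pset_n$'' cleanly through possibly many intermediate coupled paths, rather than having the perturbation ricochet unpredictably. I anticipate this requires a careful induction along the chain of strategically coupled paths, using item~(\ref{item:mcOrder}) of Claim~\ref{claim:ordering} (the marginal-cost ordering) at each link to pin down the direction of flow transfer, which is precisely why the lemma's proof bothers to establish that ordering first.
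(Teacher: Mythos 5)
There is a genuine gap, and it sits at the heart of the argument: you have the direction of the preference shift backwards. By Claim~\ref{claim:ordering}, \emph{lower}-sensitivity users occupy the \emph{lower}-latency, \emph{higher}-marginal-cost (i.e., higher-toll) paths --- so the most-sensitive users sit on $\pset_n$ (high latency, low toll), not on $\pset_1$ as you assert, and a user whose sensitivity drops to the minimum comes to prefer the low-latency paths in $\pset_1$ (she no longer cares about the toll), not the low-toll paths in $\pset_n$. Consequently the direct effect of increasing $\alpha$ --- converting a sliver of the most-sensitive users into least-sensitive ones --- is to pull mass \emph{off} $\pset_n$ and \emph{onto} $\pset_1$, which is exactly the sign the claim asserts; there is no tension to be absorbed ``in a net bookkeeping sense.'' You noticed the contradiction yourself but resolved it in the wrong direction and then deferred it to ``reading the indexing convention carefully.'' Since the entire content of the claim is this sign, a derivation built on the reversed preference structure cannot be rescued by the downstream monotone-system machinery; the comparative-statics step would simply propagate the wrong sign.

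Beyond the directional error, the machinery you propose (differentiating a square equilibrium system, verifying an M-matrix Jacobian, treating non-generic $\alpha$ by one-sided derivatives) is far heavier than what is needed, and step (3) is exactly where you anticipate trouble. The paper's proof is purely qualitative: the converted users strictly prefer $\pset_1$ by Definition~\ref{def:sc} and Claim~\ref{claim:ordering}, so at least one path in $\pset_1$ gains flow; since every path in $\pset_1$ has strictly flow-varying cost, the coupled indifference conditions force all of $\pset_1$ to gain; dually, at least one path in $\pset_n$ loses flow, and either a constant-latency path in $\pset_n$ absorbs the entire loss or every path in $\pset_n$ loses. No linearization or Jacobian sign analysis is required. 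The induction ``along the chain of strategically coupled paths'' that you anticipate as the main obstacle is not part of this claim at all --- it is handled separately in Claim~\ref{claim:deltaf}, which shows the intermediate blocks do not move.
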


\begin{proof}
Let $s_1$ denote the sensitivity of agents using $p_1$ in $\fnash$.
Increasing $\alpha$ changes the sensitivity of a small fraction of high-sensitivity users to $s_1$.
By Definition~\ref{def:sc} and Claim~\ref{claim:ordering}, these users strictly prefer the paths in $\pset_1$ to any other paths, so a marginal increase in $\alpha$ induces a marginal increase in flow on $\pset_1$.
%That is, marginally, these modified users switch to using paths in $\pset_1$ in response to an increase in $\alpha$.
That is, at least one path in $p_i \in \pset_1$ has $\dda\fnash_i(\alpha)>0$.
An implication of Claim~\ref{claim:ordering} is that all paths in $\pset_1$ have strictly flow-varying cost functions, so an increase on flow on $p_i$ induces an increase in flow on all paths in $\pset_1$, proving the first statement.

Next, let $s_n$ denote the sensitivity of agents using $p_n$ in $\fnash$; Definition~\ref{def:sc} and Claim~\ref{claim:ordering} shows that these agents weakly prefer $\pset_n$.
%this is the former sensitivity of the agents whose sensitivity was modified by the $\Sigma$ mapping.
%By Claim~\ref{claim:ordering}, before these agents' sensitivities were modified, they were using paths in $\pset_n$.
Increasing $\alpha$ shifts some of these users to $\pset_1$, so at least one path in $p_i \in \pset_n$ has $\dda\fnash_i(\alpha)<0$.
If $\pset_n$ contains a path with a constant latency function, then this is the path which the flow leaves; otherwise, the flow would deviate to a non-Nash flow.
On the other hand, if all paths in $\pset_n$ are strictly flow-varying, then every path flow in $\pset_n$ must decrease, proving the second statement.
\end{proof}

\begin{claim} \label{claim:deltaf}
For any $\alpha$, if $p_j \notin \pset_1(\alpha)$ and $p_j \notin \pset_n(\alpha)$, it holds that $\dda\fnash_j(\alpha) = 0$.
That is, $\gamma := \sum_{i\in\pset_1} \frac{\partial}{\partial\alpha}\fnash_i(\alpha) = - \sum_{i\in\pset_n} \frac{\partial}{\partial\alpha}\fnash_i(\alpha).$
\end{claim}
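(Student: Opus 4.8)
The plan is to obtain $\dda\fnash(\alpha)$ by differentiating the equilibrium conditions and then exploiting conservation of mass together with the block structure of parallel-network Nash flows that Claim~\ref{claim:ordering} and Definition~\ref{def:sc} set up. First I would fix an $\alpha$ at which the combinatorial structure of $\fnash(\alpha)$ is locally constant, meaning that the support of $\fnash$, the grouping of paths into strategically-coupled sets, and the (minimally-indifferent) sensitivities attached to those sets do not change in a neighborhood; since this structure changes at only finitely many $\alpha$, the remaining values form a measure-zero set that can be handled by continuity of $\fnash(\alpha)$ using one-sided derivatives. For such $\alpha$, order the strategically-coupled groups as $\pset_1 = B_1, B_2, \dots, B_K = \pset_n$ by increasing sensitivity of the agents they carry; within each $B_k$ all paths deliver a common effective cost $\ell_e^{s_{B_k}}$ to the agents of that group, and the mass those agents contribute, $\mu_k := \sum_{e \in B_k}\fnash_e$, equals the mass of the corresponding sensitivity type(s).

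The crux is a conservation argument. The map $\Sigma(\cdot\,;s^0)$ relabels a mass $\alpha$ of the highest-sensitivity agents as lowest-sensitivity agents, so a marginal increase $d\alpha$ alters only the masses associated with the extreme groups $\pset_1$ and $\pset_n$; the served sensitivity of any intermediate group $B_k$ ($1 < k < K$) is a fixed value of the original population, and its served mass $\mu_k$ is therefore also unchanged to first order. But for each intermediate $B_k$ the flows $\{\fnash_e\}_{e \in B_k}$ are the unique solution of the indifference equations $\ell_e^{s_{B_k}}(f_e) = c_k$ for $e \in B_k$ together with $\sum_{e \in B_k}f_e = \mu_k$, uniqueness coming from strict monotonicity of $f \mapsto \ell_e^{s}(f)$ on flow-varying edges, with constant-latency edges treated exactly as in Claim~\ref{claim:p1pn}. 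Every datum of this system is constant in $\alpha$, hence so is its solution, so $\dda\fnash_j(\alpha) = 0$ for every $p_j$ lying in neither $\pset_1$ nor $\pset_n$. Feasibility gives $\sum_{\text{all }i}\dda\fnash_i(\alpha) = 0$, and since $\pset_1$ and $\pset_n$ are disjoint whenever there is more than one group (and coincide, trivially, when there is only one), removing the vanishing intermediate terms yields $\gamma := \sum_{i \in \pset_1}\dda\fnash_i(\alpha) = -\sum_{i \in \pset_n}\dda\fnash_i(\alpha)$.

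The step I expect to be the main obstacle is justifying the assertion that only the two extreme groups' served masses move — equivalently, that the perturbation cannot cascade from $\pset_n$ through the intermediate groups into $\pset_1$ by shifting the threshold sensitivities separating consecutive groups. Establishing this requires showing that adjacent groups share at most one path and arguing, via the ordering in Claim~\ref{claim:ordering} (lower-sensitivity agents always on higher-marginal-cost, lower-latency paths) and the fact that the intermediate sensitivity types are untouched by $\Sigma$, that these thresholds are stationary to first order. The zero-flow edges, constant-latency edges, and the non-generic values of $\alpha$ at which groups merge or split are comparatively routine, but must be accounted for to make the differentiation legitimate.
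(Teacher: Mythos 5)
Your overall architecture is sound --- confine the perturbation to the two extreme strategically-coupled groups, then invoke conservation of mass --- but the proposal does not actually prove the one step that constitutes the claim. You assert that a marginal increase in $\alpha$ ``alters only the masses associated with the extreme groups'' and that each intermediate group's served mass $\mu_k$ is therefore unchanged, and you then derive the constancy of the intermediate path flows from the indifference-plus-mass system. But the equality between the (unchanged) population mass of an intermediate sensitivity type and the flow mass carried by the corresponding group of paths is precisely the no-cascade property at issue: if the newly relabeled low-sensitivity agents could spill onto the first path outside $\pset_1$, displacing its occupants onto the next group and so on, then the intermediate $\mu_k$'s, read as flow masses, would move even though the type masses do not. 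You explicitly flag this as ``the main obstacle'' and sketch a direction, but you do not close it, so as written the argument is circular at its crux.

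The paper closes exactly this gap with a short marginal-switching argument worth adopting. Let $p_i$ be the lowest-index path outside $\pset_1$. Because $p_i$ is not strategically coupled to $\pset_1$, Definition~\ref{def:sc} gives the \emph{strict} inequality $\ell_j^{s_j}(f_j) < \ell_i^{s_j}(f_i)$ for each $p_j\in\pset_1$; since Claim~\ref{claim:p1pn} gives $\dda\fnash_j(\alpha)\geq0$, the cost of $p_j$ to its occupants rises only infinitesimally and cannot overcome the strict gap, so no agent leaves $\pset_1$ for $p_i$. Conversely, the Nash inequality $\ell_j^{s_i}(f_j)\geq\ell_i^{s_i}(f_i)$ together with $\dda\fnash_j(\alpha)\geq0$ shows that no agent on $p_i$ gains by moving into $\pset_1$. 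The mirror argument starting from the highest-index path outside $\pset_n$, iterated inward, isolates every intermediate path; your conservation-of-mass step then yields the displayed identity for $\gamma$ exactly as in the paper. Your regularity caveat (restricting to $\alpha$ where the coupling structure is locally constant and using one-sided derivatives elsewhere) is a legitimate refinement that the paper leaves implicit, but it is not a substitute for the missing decoupling argument.
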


\begin{proof}
%Note that the only way for $\dda\fnash_j(\alpha)\neq0$ is if the cost of $p_j$ becomes marginally lower than the cost some other path $p_i$ (inducing agents to switch from $p_i$ to 
%
%
First, let $p_i$ be the lowest-index path such that $p_i\notin\pset_1$ (that is, $p_{i-1}\in\pset_1$).
Definition~\ref{def:sc} means that for any $p_j \in \pset_1$,
%\begin{equation}
$\ell_j^{s_j}(f_j) < \ell_{i}^{s_j}(f_i)$. %\label{eq:jisc}
%\end{equation}
Since the inequality is strict, the fact from Claim~\ref{claim:p1pn} that $\dda\fnash_j(\alpha)\geq0$ means that marginally no agent on $\pset_1$ will switch to $p_i$.

However, since $\fnash(\alpha)$ is a Nash flow, it is true that
%\begin{equation}
$\ell_j^{s_i}(f_j) \geq \ell_{i}^{s_i}(f_i).$ %\label{eq:jinf}
%\end{equation}
Here, $\dda\fnash_j(\alpha)\geq0$ implies that $\ell_j^{s_i}(f_j)$ can only increase, so no agent on $p_i$ will be incentivized to switch to any path in $\pset_1$.
Thus, the flow on $p_i$ is not influenced by the changes in flow on any lower-index path; if its flow changes, the influence must come from a higher-index path.

Now, let $p_i$ be the highest-index path such that $p_i\notin\pset_n$ (that is, $p_{i+1}\in\pset_n$).
Definition~\ref{def:sc} means that for any $p_j \in \pset_n$,
%\begin{equation}
$\ell_i^{s_i}(f_i) < \ell_{j}^{s_i}(f_j).$ % \label{eq:jnsc}
%\end{equation}
Since the inequality is strict, the fact that $\dda\fnash_j(\alpha)\leq0$ means that (marginally) no agent on $p_i$ will be incentivized to switch to any path in $\pset_n$.
However, since $\fnash(\alpha)$ is a Nash flow, it is true that
%\begin{equation}
$\ell_j^{s_j}(f_j) \leq \ell_{i}^{s_j}(f_i). $ %\label{eq:jnnf}
%\end{equation}
Here, $\dda\fnash_j(\alpha)\leq0$ implies that $\ell_j^{s_j}(f_j)$ can only decrease, so no agent on any path in $\pset_n$ will be incentivized to switch to $p_i$.
Thus, the flow on $p_i$ is not influenced by the changes in flow on any higher-index path.

This argument may then be repeated with all remaining paths that are not in $\pset_1$ or $\pset_n$ to show that the only path flows that may change in response to $\alpha$ are those in $\pset_1$ and $\pset_n$, obtaining the proof.
\end{proof}

\vspace{2mm}
\noindent\emph{Proof of Lemma~\ref{lem:sufficient}:}
We can now quantify the effect of an increase in $\alpha$ on total latency.
By definition, a marginal increase in $\alpha$ corresponds to changing the sensitivity of the highest-sensitivity user to that of the lowest-sensitivity user.
The Nash flow has been represented in such a way that this in turn causes that user to strictly prefer $\pset_1$ and thus switch from $\pset_n$ to $\pset_1$.
In the following, $\nabla_f{\L}(f)$ represents the gradient vector of $\L$ with respect to flow $f$ given by $\{\ell_p^{\rm mc}\}_{p\in\pset}$, which by Claim~\ref{claim:ordering} is ordered descending. Let $p_{j}$ be the highest-index path in $\pset_1$, and $p_{k}$ be the lowest-index path in $\pset_n$.
\begin{align}
	\dda \L\left( \fnash(\alpha)\right) 	& = \nabla_f \L\left(\fnash(\alpha)\right) \cdot \dda \fnash(\alpha) \nonumber \\
%								& = \sum\limits_{i = 1}^n \ell_i^{\rm mc}\left(\fnash_i(\alpha)\right) \dda\fnash_i(\alpha) \nonumber \\
								& = \sum\limits_{i\in \pset_1\cup\pset_n} \ell_i^{\rm mc}\left(\fnash_i(\alpha)\right) \dda\fnash_i(\alpha) \nonumber\\
%								& \geq \ell_j^{\rm mc}\left(\fnash_i(\alpha)\right)\sum\limits_{i\in \pset_i}  \dda\fnash_i(\alpha)+\ell_j^{\rm mc}\left(\fnash_i(\alpha)\right)\sum\limits_{i\in \pset_n}  \dda\fnash_i(\alpha) \nonumber\\
								& \geq \gamma\left[\ell_j^{\rm mc}\left(\fnash_j(\alpha)\right)-\ell_k^{\rm mc}\left(\fnash_k(\alpha)\right)\right] \nonumber 
								 \geq 0.%\nonumber
\end{align}
Here, the last line is a consequence of Claims~\ref{claim:p1pn} and~\ref{claim:deltaf}: since $p_j$ is defined as a member of $\pset_1$, we have that $\frac{\partial}{\partial \alpha} f_j^{\rm nf} (\alpha)\geq0$, and since $p_k$ is defined as a member of $\pset_n$, we have that $\frac{\partial}{\partial \alpha} f_k^{\rm nf} (\alpha)\leq0$; the last inequality then follows from the ordering of marginal costs as in Claim~\ref{claim:ordering}.

Since at every Nash flow $\fnash(\alpha)$ it is true that $\dda \L\left( \fnash(\alpha)\right)\geq0$, the definition of $\Sigma\left(\alpha,s_0\right)$ implies that for any initial sensitivity distribution $s_0$,
\begin{equation}
\mathcal{L}\left(\fnash\left(\Sigma\left(1,s_0\right)\right)\right) \geq \mathcal{L}\left(\fnash\left(\Sigma\left(0,s_0\right)\right)\right),
\end{equation}
or that $\Lnash\left(G,s^{\rm L},\tgmc\right)\geq \Lnash\left(G,s,\tgmc\right).$
\end{proof}

\subsubsection*{Proof of Theorem~\ref{thm:nonperv}}
Let $G\in\gee^{\rm p}$ be a parallel network, $s$ be any arbitrary sensitivity distribution, $s^{\rm L}$ be a homogeneous population in which all users have sensitivity $\bmin$, and let taxation mechanism $\tgmc$ satisfy~\eqref{eq:nonperv}.
Lemma~\ref{lem:sufficient} ensures
\vs\vs
\begin{equation}
\Lnash\left(G,s^{\rm L},\tgmc\right)\geq \Lnash\left(G,s,\tgmc\right). \label{eq:proofineq1}
\vs\vs
\end{equation}

Let $s^0$ denote a totally-insensitive homogeneous population; that is, all agents have sensitivity $0$. Note that $s^0$ is itself a low-sensitivity homogeneous population and that $s$ is a population in which all users have sensitivity no less than $0$; thus, we may simply apply Lemma~\ref{lem:sufficient} a second time to obtain
\vs
\begin{equation}
\Lnash\left(G,s^0,\tgmc\right)\geq \Lnash\left(G,s^{\rm L},\tgmc\right). \label{eq:proofineq2}
\vs
\end{equation}
The left-hand side of~\eqref{eq:proofineq2} is simply the un-tolled total latency on $G$, so combining inequalitites~\eqref{eq:proofineq1} and~\eqref{eq:proofineq2}, we obtain
\vs
\begin{equation}
\Lnash(G,\emptyset) \geq \Lnash\left(G,s,\tgmc\right).\vs
\end{equation}
Since $G$ and $s$ were arbitrary, this implies that tolls of the form in~\eqref{eq:nonperv} have a perversity index of $1$ on $\gee^{\rm p}$.
Finally, Lemma~\ref{lem:necessary} shows that these tolls are \emph{necessary} as well.
\hfill\qed

\vspace{2mm}
\noindent\emph{Proof of Theorem~\ref{thm:poaSL}:}
Let $\beta(s,(\kone,\ktwo)):=\frac{\ktwo s}{1+\kone s}$.
The constraints on $\kone,\ktwo$ specified in Theorem~\ref{thm:nonperv} imply that for all $s\in[\bmin,\bmax]$, $\beta(s,(\kone,\ktwo))\in[0,1]$.
Lemma~\ref{lem:sufficient} implies that on any $G\in\gee^{\rm p}$, worst-case routing is achieved by a homogeneous population with $s=\bmin$.
Thus, the price of anarchy for heterogeneous populations is equal to that given by Lemma~\ref{lem:meir} for $\beta\leq1$, proving the theorem.
\hfill\qed

%\vspa
\subsection*{Proofs for Theorem~\ref{thm:opt}}
\setcounter{theorem}{5}

\noindent In the following, we frequently refer to the quantity 
\begin{equation}
\beta(s,(\kone,\ktwo)) := \frac{\ktwo s}{1+\kone s};
\end{equation}
since $\tmech(\kone,\ktwo)$ induces cost functions of~\eqref{eq:thisone}. 

%since under the influence of $\tmech(\kone,\ktwo)$, a user with sensitivity $s$ experiences a cost on edge $e$ equivalent to
%\begin{equation}
%\ell_e(f_e) + \beta(s(\kone,\ktwo))f_e \ell_e'.
%\end{equation}
That is, $\beta(s,(\kone,\ktwo))$ indicates users' induced sensitivity to their marginal effect on others: when $\beta=1$, users interpret their marginal latency effect on others correctly.
%, as though they were being charged perfect marginal-cost tolls.
When $\beta<1$, users are overly delay-sensitive; when $\beta>1$, users are not delay-sensitive enough.
Note that $\tgmc$ corresponds to $\beta\leq1$.

First, we present the following lemma, which is adapted from~\cite[Propositions 7.6, 7.7]{Meir2015b},
giving the price of anarchy for known-sensitivity homogeneous population for fixed $\kone,\ktwo$:
\begin{lemma}[Meir and Parkes, 2015~\cite{Meir2015b}] \label{lem:meir}
Let $\gee_d$ be the class of routing problems with polynomial latency functions of degree no more than $d\geq1$, and let $\gee$ denote the class of all routing problems with polynomial latency functions.
In the following, let $\kappa:=(\kone,\ktwo)$ and let $\beta(s,\kappa):=\frac{\ktwo s}{1+\kone s} \geq 0$.
Then the price of anarchy resulting from $\tmech(\kone,\ktwo)$ for a homogeneous population with sensitivity $s$ is
\begin{equation} \label{eq:poahelper}
%\sup\limits_{G\in\gee^p}\frac{\Lnf\left(G,\beta,\tmech^{\rm mc}\right)}{\Lopt(G)}
\overbar{\poa}\left(\gee_d,s,\kappa\right)\hs =\hs \left\{\hs\hs\hs
\begin{array}{cl}
\frac{1}{1+d\ugly-d\left(\hs\frac{1+d\ugly}{1+d}\hs\right)^{\frac{d+1}{d}}} 	&\hs\hs\hs\mbox{ if } \ugly \leq 1, \vspace{1mm}\\
\uglyinv^{-d}\left(\hs\frac{1+d\ugly}{1+d}\hs\right)^{d+1} 				&\hs\hs\hs\mbox{ if } \ugly > 1.
\end{array}\right.
\end{equation}
%If $d$ is not known, the PoA is given by
%\begin{equation} \label{eq:poahelperlim}
%%\sup\limits_{G\in\gee^p}\frac{\Lnf\left(G,\beta,\tmech^{\rm mc}\right)}{\Lopt(G)}
%\overbar{\poa}\left(\gee,s,\kappa\right)\hs =\hs \left\{\hs\hs\hs
%\begin{array}{cl}
%\left({\ugly\hs\left(1\hs-\hs\log\ugly\right)}\right)^{-1}	&\hs\hs\hs\hs\mbox{ if } \ugly \leq 1, \vspace{1mm}\\
%\ugly\exp\left(\frac{1}{\ugly}-1\right)			&\hs\hs\hs\hs\mbox{ if } \ugly > 1.
%\end{array}\right.
%%
%\end{equation}
%\vspace{0mm}
%\vspace{-4mm}
\end{lemma}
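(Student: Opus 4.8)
The plan is to reduce the taxed, homogeneous-population game to a \emph{biased} routing game, bound its price of anarchy by a local (per-edge) smoothness inequality, and then match the bound with suitably scaled Pigou-type instances; since the statement is credited to~\cite{Meir2015b}, the purpose of a self-contained argument here is chiefly to record the reduction and recall the extremal configurations.

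First I would make the reduction explicit, exactly as in the derivation of~\eqref{eq:thisone}. When $\tmech(\kone,\ktwo)$ is levied and every user shares sensitivity $s$, each user's perceived cost on edge $e$ is $(1+\kone s)\ell_e(f_e)+\ktwo s\,f_e\ell_e'(f_e)$; dividing by the positive scalar $1+\kone s$ leaves the Nash flows unchanged, so the taxed Nash flow is precisely the Nash flow of the game in which players perceive $\hat\ell_e(x):=\ell_e(x)+\beta(s,\kappa)\,x\,\ell_e'(x)$ while the true social cost remains $\L(f)=\sum_e f_e\ell_e(f_e)$. This is the partially-altruistic / biased-cost model of~\cite{Chen2014,Meir2015b} with bias parameter $\beta(s,\kappa)$, so from this point the claim is exactly~\cite[Prop.~7.6, 7.7]{Meir2015b}; the remaining two paragraphs sketch why.

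For the upper bound I would use the variational inequality characterising a biased Nash flow $f$: $\sum_e\hat\ell_e(f_e)(g_e-f_e)\ge 0$ for every feasible $g$. Taking $g=f^*$ optimal and using $\hat\ell_e(f_e)f_e=f_e\ell_e(f_e)+\beta f_e^2\ell_e'(f_e)$ gives $\L(f)\le\sum_e\big(\hat\ell_e(f_e)f_e^*-\beta f_e^2\ell_e'(f_e)\big)$. Because each latency is a nonnegative combination of monomials of degree at most $d$, it suffices to find, for the best admissible $\lambda$, the least $\mu$ with $(1+\beta k)x^ky-\beta k\,x^{k+1}\le\lambda y^{k+1}+\mu x^{k+1}$ for all $x,y\ge 0$ and all $0\le k\le d$; then $\L(f)\le\lambda\L(f^*)+\mu\L(f)$ and $\poa\le\lambda/(1-\mu)$. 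Normalising $x=1$ and maximising over $y$ turns the inner condition into $\mu\ge\frac{k}{k+1}(1+\beta k)\big(\tfrac{1+\beta k}{\lambda(k+1)}\big)^{1/k}-\beta k$. Constant-latency edges ($k=0$) force $\lambda\ge 1$; for $\beta\le 1$ this constraint binds, the worst monomial is $k=d$, and substituting $\lambda=1,k=d$ and simplifying yields the first branch of~\eqref{eq:poahelper}, while for $\beta>1$ the optimum over $(\lambda,k)$ shifts and produces the second branch.

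The main obstacle I anticipate is precisely this optimisation over $(\lambda,k)$ together with the accompanying tightness construction --- showing no network does worse than a scaled two-edge Pigou instance. For $\beta\le 1$ the instance is a congestible edge $\ell_1(x)=x^d$ in parallel with a constant edge of value $(1+\beta d)r^d$: the taxed Nash flow routes all $r$ units on edge~$1$ at total cost $r^{d+1}$, the optimal flow routes a fraction $\big(\tfrac{1+\beta d}{d+1}\big)^{1/d}$ there, and a direct computation of the ratio of total latencies recovers $\frac{1}{1+d\beta-d((1+d\beta)/(1+d))^{(d+1)/d}}$; an analogous (differently scaled) instance handles $\beta>1$. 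Verifying that these Pigou instances are genuinely extremal over all networks --- i.e.\ that series--parallel composition and more general topology cannot amplify the ratio --- is the technical content one imports from~\cite{Meir2015b}.
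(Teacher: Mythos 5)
Your proposal matches the paper's treatment: the paper proves this lemma only by recording the reduction to the biased-cost model (dividing the perceived cost by $1+\kone s$ to obtain $\ell_e + \beta(s,\kappa)\,f_e\ell_e'$, as in the derivation of~\eqref{eq:thisone}) and then citing \cite[Props.~7.6, 7.7]{Meir2015b} directly, which is exactly your first step. The additional smoothness argument and Pigou-type tightness instances you sketch are correct (your $(\lambda,\mu)$ optimization and the ratio computation for the $\beta\leq 1$ instance both check out), but they reprove the imported result rather than depart from the paper's route.
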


\noindent Lemma~\ref{lem:meir} follows from~\cite[Propositions 7.6, 7.7]{Meir2015b} for reasons elucidated in the proof of Lemma~\ref{lem:sufficient}.
%since for a homogeneous population with sensitivity~$s$, setting taxation functions equal to $\tau_e(f_e) = \kone\ell_e(f_e)+\ktwo f_e\ell'_e(f_e)$ induces the same subjective costs as charging marginal-cost taxes~\eqref{eq:mc} to a homogeneous population with sensitivity $\beta(s,\kappa):=\frac{\ktwo s}{1+\kone s}$.
This allows us to state:

\begin{lemma}\label{lem:pervSU}
Let $\gee^{\rm p}_d$ denote the class of routing problems with parallel networks and polynomial latency functions of degree no more than $d\geq 1$.
Given $\ktwo>0$, when $\kone\in(-1/\bmax,\ktwo-1/\bmax)$, the homogeneous perversity index of $\tmech(\kone,\ktwo)$ is greater than $1$ and equal to the price of anarchy experienced by a population with sensitivity $\bmax$:
\begin{equation}
\perv^{\rm hm}\left(\gee^{\rm p}_d,\tmech(\kone,\ktwo)\right) = \auxpoa\left(\gee_d,\bmax,(\kone,\ktwo)\right).
\end{equation}
\end{lemma}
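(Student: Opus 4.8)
The plan is to prove the two inequalities $\perv^{\rm hm}\left(\gee^{\rm p}_d,\tmech(\kone,\ktwo)\right)\leq\auxpoa\left(\gee_d,\bmax,(\kone,\ktwo)\right)$ and $\perv^{\rm hm}\left(\gee^{\rm p}_d,\tmech(\kone,\ktwo)\right)\geq\auxpoa\left(\gee_d,\bmax,(\kone,\ktwo)\right)$ separately, after recording a few elementary facts about the map $s\mapsto\beta(s,(\kone,\ktwo))=\frac{\ktwo s}{1+\kone s}$. First I would note that the hypothesis $\kone\in(-1/\bmax,\ktwo-1/\bmax)$ gives $1+\kone\bmax>0$ and $1+\kone\bmax<\ktwo\bmax$, so $\beta(\bmax,(\kone,\ktwo))>1$; moreover $\ktwo-\kone>1/\bmax>0$, so $\beta(s,(\kone,\ktwo))$ has positive $s$-derivative $\ktwo/(1+\kone s)^2$ on $[\bmin,\bmax]$ and crosses $1$ at $s^\dagger=1/(\ktwo-\kone)\in(0,\bmax)$. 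Finally, using the formula of Lemma~\ref{lem:meir}, a short log-derivative computation shows that $\auxpoa\left(\gee_d,s,(\kone,\ktwo)\right)$, regarded as a function of $\beta=\beta(s,(\kone,\ktwo))$, equals $1$ at $\beta=1$ and is strictly increasing on $[1,\infty)$; in particular $\auxpoa\left(\gee_d,\bmax,(\kone,\ktwo)\right)>1$, which already settles the ``greater than $1$'' assertion.

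For the upper bound, fix a parallel network $G\in\gee^{\rm p}_d$ and a homogeneous population with sensitivity $s\in[\bmin,\bmax]$, and set $\beta=\beta(s,(\kone,\ktwo))$. As in~\eqref{eq:thisone}, this population's induced edge costs under $\tmech(\kone,\ktwo)$ are proportional to $\ell_e+\beta\ell_e^*$, so its Nash flow and total latency coincide with those of a unit-sensitivity homogeneous population under the scaled marginal-cost mechanism $\tmech(0,\beta)$. If $\beta\leq1$, then $\tmech(0,\beta)$ satisfies the hypotheses of Lemma~\ref{lem:sufficient} (with sensitivity bounds $0$ and $1$), and comparing the unit-sensitivity population to the sensitivity-$0$ population via Lemma~\ref{lem:sufficient} yields $\Lnash(G,s,\tmech(\kone,\ktwo))\leq\Lnash(G,\emptyset)$, so the perversity ratio on $G$ is at most $1\leq\auxpoa\left(\gee_d,\bmax,(\kone,\ktwo)\right)$. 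If instead $\beta>1$, I would use $\Lnash(G,\emptyset)\geq\Lopt(G)$ and Lemma~\ref{lem:meir} to write $\frac{\Lnash(G,s,\tmech(\kone,\ktwo))}{\Lnash(G,\emptyset)}\leq\frac{\Lnash(G,s,\tmech(\kone,\ktwo))}{\Lopt(G)}\leq\auxpoa\left(\gee_d,s,(\kone,\ktwo)\right)$, and then invoke the monotonicity recorded above together with $\beta(s,(\kone,\ktwo))\leq\beta(\bmax,(\kone,\ktwo))$ to conclude $\auxpoa\left(\gee_d,s,(\kone,\ktwo)\right)\leq\auxpoa\left(\gee_d,\bmax,(\kone,\ktwo)\right)$. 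Taking the supremum over $G$ and $s$ then gives the upper bound.

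For the lower bound, I would exhibit a single parallel network attaining the value. Set $\beta=\beta(\bmax,(\kone,\ktwo))>1$ and use the homogeneous population with sensitivity $\bmax$, whose induced cost parameter in~\eqref{eq:thisone} is exactly $\beta$. Consider a two-edge parallel network with edge $1$ of constant latency $\ell_1\equiv c$ and edge $2$ with $\ell_2(f_2)=f_2^d$, traffic rate $r$, and $c$ chosen so that $r^d/c$ equals the worst-case ratio appearing in the tightness analysis underlying Lemma~\ref{lem:meir}. A routine computation (mirroring the Pigou-type analysis behind Lemma~\ref{lem:meir}) shows this choice forces $(d+1)r^d\leq c$, so the uninfluenced Nash flow routes all of $r$ on edge $2$ and is itself optimal, whereas under $\tmech(\kone,\ktwo)$ the sensitivity-$\bmax$ population splits, with edge-$2$ flow $(c/(1+\beta d))^{1/d}$; since $\Lnash(G,\emptyset)=\Lopt(G)$ on this instance, the perversity ratio equals the price-of-anarchy ratio, which by the choice of parameters is $\beta^{-d}\left(\frac{1+d\beta}{1+d}\right)^{d+1}=\auxpoa\left(\gee_d,\bmax,(\kone,\ktwo)\right)$. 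Combining the two bounds yields the claimed equality.

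I expect the main obstacle to be the $\beta\leq1$ sub-case of the upper bound: there the optimal-flow benchmark is too weak, since $\auxpoa\left(\gee_d,s,(\kone,\ktwo)\right)$ can exceed $\auxpoa\left(\gee_d,\bmax,(\kone,\ktwo)\right)$ when $\beta(s,(\kone,\ktwo))$ is small, so one must argue directly against the uninfluenced flow rather than against the optimum; the cleanest route is the reduction of the homogeneous-$s$ instance under $\tmech(\kone,\ktwo)$ to a unit-sensitivity instance under a conservatively scaled marginal-cost mechanism, so that Lemma~\ref{lem:sufficient} applies verbatim. A secondary point requiring care is checking that the tight instance for Lemma~\ref{lem:meir} in the $\beta>1$ regime is a parallel network whose uninfluenced Nash flow is optimal, which is exactly what makes the perversity ratio coincide with the price of anarchy.
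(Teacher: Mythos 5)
Your proposal is correct and follows essentially the same route as the paper's proof: Lemma~\ref{lem:sufficient} disposes of populations with $\beta(s,(\kone,\ktwo))\leq1$, the monotonicity of $\beta$ in $s$ and of the price-of-anarchy expression of Lemma~\ref{lem:meir} in $\beta>1$ (together with $\perv\leq\poa$) gives the upper bound at $s=\bmax$, and tightness is obtained from the same Pigou-type instance (a degree-$d$ monomial in parallel with a constant edge, as in Figure~\ref{fig:networks}(c)) on which the uninfluenced Nash flow is optimal so that the perversity ratio coincides with the price-of-anarchy ratio. Your write-up is merely more explicit than the paper's about the $\beta\leq1$ reduction to $\tmech(0,\beta)$ and about why the optimal-flow benchmark suffices only in the $\beta>1$ regime.
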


\vspace{2mm}
\noindent\emph{Proof of Lemma~\ref{lem:pervSU}:}
Let $\beta(s):=\frac{\ktwo s}{1+\kone s}$.
By Lemma~\ref{lem:sufficient}, the perversity index can never be greater than $1$ \emph{due to a population with  $\beta(s)\leq 1$.}
However, $\beta(s)>1$ is increasing in $s$ whenever $\ktwo>0$ and $\kone>-1/\bmax$, implying that the price of anarchy \emph{due to a population with $\beta(s)>1$} is achieved by one with $s=\bmax$.
To prove the lemma, we show that there is a perverse flow for $s=\bmax$ whose perversity equals the corresponding price of anarchy.
Accordingly, let $\beta:=\frac{\ktwo\bmax}{1+\kone\bmax}>1$.
Consider the network in Figure~\ref{fig:networks}(c) with $r=1$ and $\alpha=\frac{\left(\beta(1+d)\right)^d}{(1+d\beta)^{d+1}}$; this network is borrowed from~\cite{Meir2015b}.
The optimal and uninfluenced Nash flow on this network has $f_1=1$, but the tolled Nash flow when $s=\bmax$ has larger total latency which achieves the $\beta>1$ PoA bound given in~\eqref{eq:poahelper}, completing the proof.
\hfill\qed

\vspace{2mm}
\noindent\emph{Proof of Theorem~\ref{thm:opt}:}
%\begin{proof}
The restriction to homogeneous populations here allows us to apply Lemma~\ref{lem:meir} directly, and leverage its monotonicity properties to obtain the result.
Consider the expressions given by~\eqref{eq:poahelper} as a function of $\beta(s,\kappa)$.
The price of anarchy as a function of $\beta$ is bowl-shaped:
when $\beta(s,\kappa)<1$, the price of anarchy is strictly decreasing in $\beta(s,\kappa)$,
when $\beta(s,\kappa)>1$, the price of anarchy is strictly increasing in $\beta(s,\kappa)$, and 
when $\beta(s,\kappa)=1$, the price of anarchy is equal to $1$.
%Thus, the price of anarchy is always realized by an extreme sensitivity of $s=\bmin$ or $s=\bmax$.
Thus, minimizing the price of anarchy reduces to choosing $\kone$ and $\ktwo$ such that for all $s\in[\bmin,\bmax]$, $\beta(s,\kappa)$ takes values as ``close'' to $1$ as possible, where this closeness is measured by the expressions in~\eqref{eq:poahelper}.
Furthermore, when $\kone>-1/\bmax$ and $\ktwo\geq0$, the monotonicity of $\beta(s,\kappa)$ ensures that the price of anarchy is achieved by an extreme sensitivity population with $s\in\{\bmin,\bmax\}$.

For any fixed $\ktwo>0$ and fixed $s\in[\bmin,\bmax]$, $\beta(s,\kappa)$ is decreasing in $\kone$ whenever $\kone>-1/\bmax$; also, when $\kone=\ktwo-1/s$, we have $\beta(s,\kappa)=1$ (thus, the PoA for that $s$ is $1$).
Combined with the above, this means the price of anarchy is minimized in $\kone$ on the interval $I:=\left(\max\{-1/\bmax,\ktwo-1/\bmin\},\ktwo-1/\bmax\right)$.

Accordingly, for any $\ktwo$, let $\delta:=\Kmax-\ktwo$.
For any $\ktwo>0$ and $\kone\in I$, it holds that \vs\vs
$$\vs\vs\beta(\bmin,(\kone,\ktwo)<\beta(\bmin,(\kone+\delta,\ktwo+\delta)),$$
and\vs\vs
$$\vs\vs\beta(\bmax,(\kone,\ktwo)>\beta(\bmax,(\kone+\delta,\ktwo+\delta)).$$
That is, when $\kone\in I$, the price of anarchy can be decreased by adding $\delta$ to both $\kone$ and $\ktwo$, showing that $\ktwo^*=\Kmax$.

Finally, given that $\ktwo^*=\Kmax$ and $\kone\in I$, note that $\auxpoa(\gee_d,\bmin,(\kone,\ktwo^*))$ is strictly increasing in $\kone$, that~$\auxpoa(\gee_d,\bmax,(\kone,\ktwo^*))$ is strictly decreasing in~$\kone$, and both are continuous for all $\kone$.
This guarantees that the PoA-minimizing $\kone^*$ must lie on the interval %
%\footnote{
%Though our argument considers only the interval $I$, our uniqueness claim holds on $\left(-1/\bmax,\Kmax-1/\bmax \right)$ because for $\kone\in(-1/\bmax,\Kmax-1/\bmin)$, it holds that $\auxpoa(\gee_d,\bmax,(\kone,\Kmax))\geq\auxpoa(\gee_d,\bmin,(\kone,\Kmax))$.
%} %
$\left(-1/\bmax,\Kmax-1/\bmax \right)$, and~\eqref{eq:tradeoff} is an immediate consequence of Lemma~\ref{lem:pervSU}.
%\end{proof}
\hfill\qed

% biography section
% 
% If you have an EPS/PDF photo (graphicx package needed) extra braces are
% needed around the contents of the optional argument to biography to prevent
% the LaTeX parser from getting confused when it sees the complicated
% \includegraphics command within an optional argument. (You could create
% your own custom macro containing the \includegraphics command to make things
% simpler here.)
%\begin{IEEEbiography}[{\includegraphics[width=1in,height=1.25in,clip,keepaspectratio]{mshell}}]{Michael Shell}
% or if you just want to reserve a space for a photo:

\vspace{-.25in}
\begin{IEEEbiography}[{\includegraphics[width=1in,height=1.25in,clip,keepaspectratio]{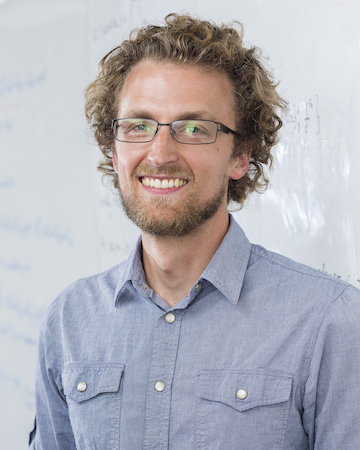}}]{Philip N. Brown}
is an Assistant Professor in the Department of Computer Science at the University of Colorado, Colorado Springs.
Philip received the Bachelor of Science in Electrical Engineering in 2007 from Georgia Tech, after which he spent several years designing control systems and process technology for the biodiesel industry.
He received the Master of Science in Electrical Engineering in 2015 from the University of Colorado at Boulder under the supervision of Jason R. Marden, where he was a recipient of the University of Colorado Chancellor's Fellowship.
He received the PhD in Electrical and Computer Engineering from the University of California, Santa Barbara under the supervision of Jason R. Marden.
He was finalist for the Best Student Paper Award at the 2016 and 2017 IEEE Conferences on Decision and Control, and received the 2018 CCDC Best PhD Thesis Award from UCSB.
Philip is interested in the interactions between engineered and social systems.
\end{IEEEbiography}

% if you will not have a photo at all:
%\begin{IEEEbiographynophoto}{Philip N. Brown}
%\end{IEEEbiographynophoto}

\vfill
\vspace{-1in}
\begin{IEEEbiography}[{\includegraphics[width=1in,height=1.25in,clip,keepaspectratio]{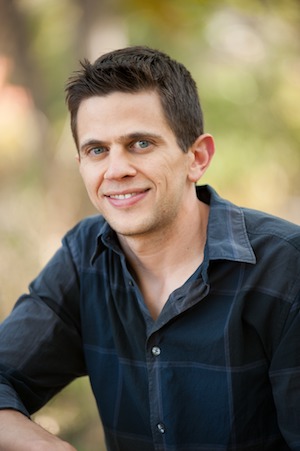}}]{Jason R. Marden}
is an Associate Professor in the Department of Electrical and Computer Engineering at the University of California, Santa Barbara.
Jason received the Bachelor of Science in Mechanical Engineering in 2001 from UCLA, and the PhD in Mechanical Engineering in 2007, also from UCLA, under the supervision of Jeff S. Shamma, where he was awarded the Outstanding Graduating PhD Student in Mechanical Engineering.
After graduating from UCLA, he served as a junior fellow in the Social and Information Sciences Laboratory at the California Institute of Technology until 2010, and then as an Assistant Professor at the University of Colorado until 2015.
Jason is a recipient of an ONR Young Investigator Award (2015), NSF Career Award (2014), AFOSR Young Investigator Award (2012), SIAM CST Best Sicon Paper Award (2015), and the American Automatic Control Council Donald P. Eckman Award (2012).
Jason's research interests focus on game theoretic methods for the control of distributed multiagent systems.
\end{IEEEbiography}

% insert where needed to balance the two columns on the last page with
% biographies
%\newpage

%\begin{IEEEbiographynophoto}{Jane Doe}
%Biography text here.
%\end{IEEEbiographynophoto}

% You can push biographies down or up by placing
% a \vfill before or after them. The appropriate
% use of \vfill depends on what kind of text is
% on the last page and whether or not the columns
% are being equalized.

%\vfill

% Can be used to pull up biographies so that the bottom of the last one
% is flush with the other column.
%\enlargethispage{-5in}

\end{document}